\definecolor{linksblue}{rgb}{0.0,0.1,0.6}
\renewcommand{\emph}[1]{\textcolor{linksblue}{\it #1}\xspace}
\newcommand{\bound}{42\xspace}
\renewcommand{\orcidID}[1]{\href{https://orcid.org/#1}{\includegraphics[scale=.03]{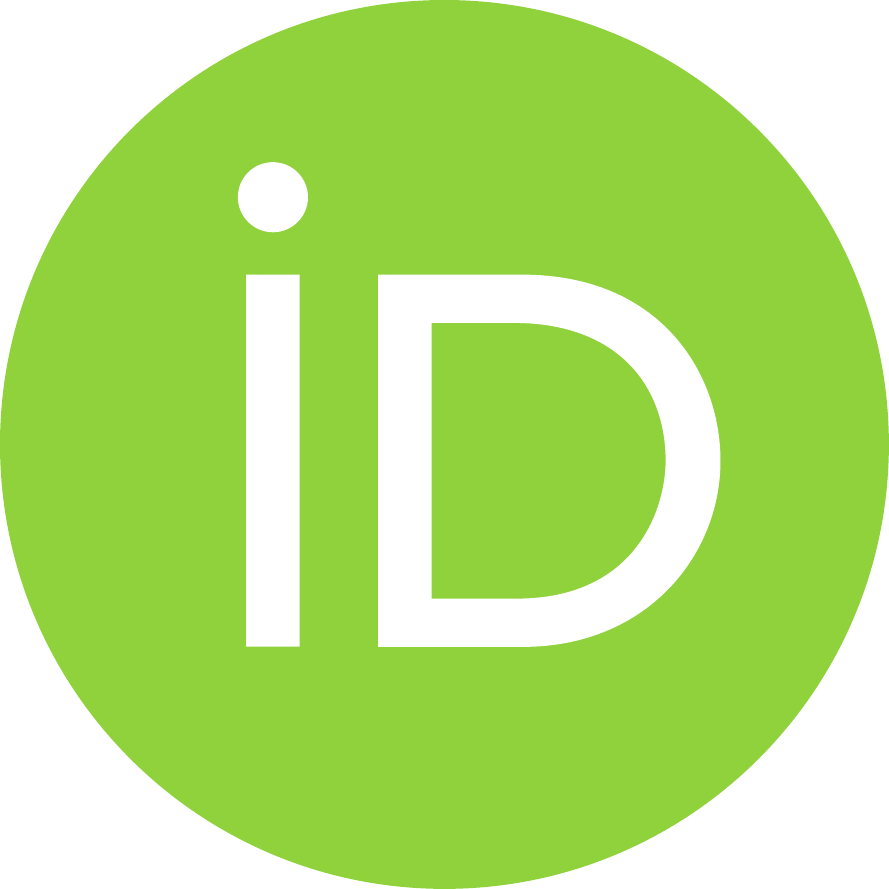}}}
\newcommand{\Qh}{{\ensuremath{\mathcal{Q}}}}
\newcommand{\lo}[1]{{\ensuremath{\prec_\mathrm{\tiny #1}}}}
\newcommand{\ql}[1]{\ensuremath{\mathit{QL}(#1)}}
\newenvironment{myenumerate}[1]
{\begin{enumerate}[label={#1},labelindent=0.0\parindent,leftmargin=*,align=left]}{\end{enumerate}}
\newenvironment{myenumerate2}[1]
{\begin{enumerate}[label={#1},leftmargin=30pt,align=right]
\setlength{\labelwidth}{30pt}
\setlength{\itemindent}{0.0em}
\setlength{\labelsep}{1ex} 
}{\end{enumerate}}
\begin{document}

\title{On the Queue Number of Planar Graphs}
\titlerunning{On the Queue Number of Planar Graphs}
\author{Michael~A.~Bekos\inst{1}\orcidID{0000-0002-3414-7444}, Martin~Gronemann\inst{2}\orcidID{0000-0003-2565-090X}, Chrysanthi~N.~Raftopoulou\inst{3}\orcidID{0000-0001-6457-516X}}

\institute{%
Department of Computer Science, University of T{\"u}bingen, T{\"u}bingen, Germany
\\\email{bekos@informatik.uni-tuebingen.de}
\and
Theoretical Computer Science, Osnabr\"uck University, Osnabr\"uck, Germany
\\\email{martin.gronemann@uni-osnabrueck.de}
\and
School of Applied Mathematics and Physical Sciences, NTUA, Athens, Greece
\\\email{crisraft@math.ntua.gr}
}

\maketitle

\begin{abstract}
A $k$-queue layout is a special type of a linear layout, in which the linear order avoids $(k+1)$-rainbows, i.e., $k+1$ independent edges that pairwise form a nested pair. The optimization goal is to determine the \emph{queue number} of a graph, i.e., the minimum value of $k$ for which a $k$-queue layout is feasible.
Recently, Dujmović et al.~[J.\ ACM, 67(4), 22:1-38, 2020] showed that the queue number of planar graphs is at most $49$, thus settling in the positive a long-standing conjecture by Heath, Leighton and Rosenberg. To achieve this breakthrough result, their approach involves three different techniques: 
(i)~an algorithm to obtain straight-line drawings of outerplanar graphs, in which the $y$-distance of any two adjacent vertices is $1$ or $2$,
(ii)~an algorithm to obtain $5$-queue layouts of planar $3$-trees, and 
(iii)~a decomposition of a planar graph into so-called tripods.
In this work, we push further each of these techniques to obtain the first non-trivial improvement of the upper bound on queue number of planar graphs from $49$ to $\bound$.
\keywords{Queue layouts, Planar graphs, Queue number}
\end{abstract}

\section{Introduction}
\label{sec:introduction}

Linear layouts of graphs have a long tradition of study in different contexts, including graph theory and graph drawing, as they form a framework for defining different graph-theoretic parameters with several applications; see, e.g.,~\cite{DBLP:journals/csur/DiazPS02}. In this regard, one seeks to find a total order of the vertices of a graph that reaches a certain optimization goal~\cite{DBLP:journals/ita/BarthPRR95,DBLP:journals/jgt/ChinnCDG82,DBLP:journals/dam/HortonPB00}. In this work, we focus on a well-studied type of linear layouts, called \emph{queue layout}~\cite{DBLP:journals/jacm/DujmovicJMMUW20,DBLP:journals/siamdm/HeathLR92,DBLP:journals/siamcomp/HeathR92,DBLP:journals/combinatorics/Wiechert17}, in which the goal is to minimize the size of the largest \emph{rainbow}, namely, a set of independent edges that are pairwise nested. Equivalently, the problem asks for a linear order of the vertices~and a partition of the edges into a minimum number of queues (called \emph{queue~number}), such that no two independent edges in the same queue are nested~\cite{DBLP:journals/siamcomp/HeathR92}; see \cref{fig:queue}.

Queue layouts of graphs were introduced by Heath and Rosenberg~\cite{DBLP:journals/siamcomp/HeathR92} in~1992 as the counterpart of \emph{stack layouts} (widely known also as \emph{book embeddings}), in which the edges must be partitioned into a minimum number of stacks (called \emph{stack~number}), such that no two edges in the same stack cross~\cite{DBLP:journals/jct/BernhartK79}; see \cref{fig:stack}. Since their introduction, queue layouts of graphs have~been a fruitful subject~of~intense research with several important milestones over the years~\cite{DBLP:journals/algorithmica/AlamBGKP20,DBLP:journals/algorithmica/BannisterDDEW19,DBLP:journals/siamcomp/BekosFGMMRU19,DBLP:journals/siamcomp/DujmovicMW05,DBLP:journals/siamdm/HeathLR92,DBLP:journals/combinatorics/Wiechert17,DBLP:conf/fsttcs/Wood02}; for an introduction, we refer the interested reader to~\cite{DBLP:journals/dmtcs/DujmovicW04}. 

The most intriguing problem in this research field is undoubtedly the problem of specifying the queue number of planar graphs, that is, the maximum queue number of a planar graph. This problem dates back to a conjecture by Heath, Leighton and Rosenberg, who in 1992 conjectured that the queue number of planar graphs is bounded~\cite{DBLP:journals/siamdm/HeathLR92}. Notably, despite the different efforts~\cite{DBLP:journals/algorithmica/BannisterDDEW19,DBLP:journals/siamcomp/BattistaFP13,DBLP:journals/jgaa/DujmovicF18}, this conjecture remained unanswered for more than two decades. 
That only changed in 2019 with a breakthrough result of Dujmović, Joret, Micek, Morin, Ueckerdt and Wood~\cite{DBLP:conf/focs/DujmovicJMMUW19}, who managed to settle in the positive the conjecture, as they showed that the queue~number of planar graphs is at most $49$. The best-known corresponding lower bound is~$4$ due to Alam et al.~\cite{DBLP:journals/algorithmica/AlamBGKP20}. 

It is immediate to see, however, that the gap between the currently best-known lower and upper bounds is rather large, which implies that the exact queue~number of planar graphs is, up to the point of writing, still unknown. Note that this is in contrast with the maximum stack~number of planar graphs, which was recently shown to be exactly $4$~\cite{DBLP:journals/jocg/KaufmannBKPRU20,DBLP:journals/jcss/Yannakakis89}. Also, the existing gap in the bounds on the queue number of planar graphs gives the intuition that it is unlikely the upper bound of $49$ by Dujmović at al.~\cite{DBLP:conf/focs/DujmovicJMMUW19} to be tight, even though in the last two years no improvement appeared in the literature.

\begin{figure}[t!]
	\centering	
	\begin{subfigure}[b]{.25\textwidth}
		\centering
		\includegraphics[page=3]{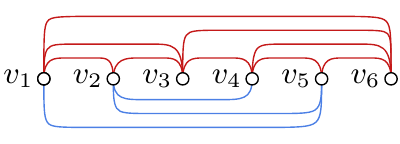}
		\caption{}
		\label{fig:goldner-harary}
	\end{subfigure}
	\hfil
	\begin{subfigure}[b]{.35\textwidth}
		\centering
		\includegraphics[page=2]{figures/octahedron}
		\caption{2-queue layout}
		\label{fig:queue}
	\end{subfigure}
	\hfil
	\begin{subfigure}[b]{.35\textwidth}
		\centering
		\includegraphics[page=1]{figures/octahedron}
		\caption{2-stack layout}
		\label{fig:stack}
	\end{subfigure}
	\caption{
    (a)~The octahedron graph and 
    (b)-(c)~different linear layouts of it.}
	\label{fig:sample}
\end{figure}

\paragraph{Our contribution.} We verify the aforementioned intuition by reducing the upper bound on the queue number of planar graphs from $49$ to $\bound$ (see \Cref{thm:main} in \Cref{sec:main}). To achieve this, we present improvements to each of the following three main techniques involved in the approach by Dujmović at al.~\cite{DBLP:conf/focs/DujmovicJMMUW19}: 
\begin{enumerate*}[label={(\roman*)}]
    \item an algorithm to obtain straight-line drawings of outerplanar graphs, in which the $y$-distance of any two adjacent vertices is $1$~or~$2$~\cite{DBLP:journals/dmtcs/DujmovicPW04}; see \Cref{sec:outerplanar-refine}, 
    \item an algorithm to obtain $5$-queue layouts of planar $3$-trees~\cite{DBLP:journals/algorithmica/AlamBGKP20}; see \cref{sec:planar-3-trees-refine}, and 
    \item a decomposition of a planar graph into so-called tripods~\cite{DBLP:conf/focs/DujmovicJMMUW19}; see \cref{sec:planar-refine}. 
\end{enumerate*}
Although we assume familiarity with these techniques, we outline in \cref{sec:all-outlines} their most important~aspects; also, for preliminary notions and standard terminology refer to \cref{sec:preliminaries}. 

\section{Preliminaries}
\label{sec:preliminaries}

A \emph{drawing} of a graph maps each vertex to a distinct point of the Euclidean plane and each edge to a Jordan curve connecting its endpoints. A drawing of a graph is \emph{planar}, if no two edges cross (expect at common endpoints). A graph that admits a planar drawing is called \emph{planar}. A planar drawing partitions the plane into topologically connected regions called \emph{faces}; its unbounded one is called \emph{outer}. A \emph{combinatorial embedding} of a planar graph is an equivalence class of planar drawings that are pairwise topologically equivalent, i.e., they define the same set of faces up to an orientation-preserving homeomorphism of the plane. A planar graph  together with a combinatorial embedding is a \emph{plane graph}. A graph that admits a planar drawing, in which all vertices are incident to its outer face, is called \emph{outerplanar}. It is known that outerplanar graphs have treewidth at most~$2$. A planar graph with treewidth at most $3$ is commonly referred to as a \emph{planar $3$-tree}. Planar $3$-trees are subgraphs of \emph{maximal planar $3$-trees}, where such an $n$-vertex graph is either a $3$-cycle, if $n=3$, or has a degree-$3$ vertex whose deletion yields a maximal planar $3$-tree with $n-1$ vertices, if $n > 3$.

A \emph{vertex order} $\prec$ of a simple undirected graph $G$ is a total order of its vertices, such that for any two vertices $u$ and $v$ of $G$, $u \prec v$ if and only if $u$ precedes $v$ in the order. 
 Let $F$ be a set of $k \geq 2$ independent edges $(u_i, v_i)$ of $G$, that is, $F=\{(u_i, v_i);\;i=1,\ldots,k\}$. 
If $u_1\prec \ldots\prec u_k\prec v_k\prec \ldots\prec v_1$, then we say that the edges of $F$ form a \emph{$k$-rainbow}, while if $u_1\prec v_1\prec \ldots\prec u_k\prec v_k$, then the edges of $F$ form a \emph{$k$-necklace}. The edges of $F$ form a \emph{$k$-twist}, if $u_1\prec \ldots\prec u_k\prec v_1\prec \ldots\prec v_k$; see \cref{fig:necklace-twist}. Two independent edges that form a $2$-rainbow ($2$-necklace, $2$-twist) are referred to as \emph{nested} (\emph{disjoint}, \emph{crossing}, respectively).

\begin{figure}[h]
	\centering	
	\begin{subfigure}[b]{.32\textwidth}
		\centering
		\includegraphics[scale=1,page=1]{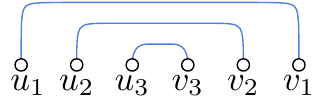}
		\caption{}
		\label{fig:rainbow}
	\end{subfigure}
	\hfil
	\begin{subfigure}[b]{.32\textwidth}
		\centering
		\includegraphics[scale=1,page=3]{figures/preliminaries}
		\caption{}
		\label{fig:necklace}
	\end{subfigure}
	\hfil
	\begin{subfigure}[b]{.32\textwidth}
		\centering
		\includegraphics[scale=1,page=2]{figures/preliminaries}
		\caption{}
		\label{fig:twist}
	\end{subfigure}
	\caption{Illustration of
		(a)~a $3$-rainbow, 
		(b)~a $3$-necklace, and 
		(c)~a $3$-twist.}
	\label{fig:necklace-twist}
\end{figure}

A \emph{$k$-queue layout} of a graph consists of a vertex order $\prec$ of $G$ and a partition of the edge-set of $G$ into $k$ sets of pairwise non-nested edges, called \emph{queues}. A preliminary result by Heath and Rosenberg~\cite{DBLP:journals/siamcomp/HeathR92} states that a graph admits a $k$-queue layout if and only if it admits a vertex order in which no $(k+1)$-rainbow is formed. The \emph{queue number} of a graph $G$, denoted by $\mathrm{qn}(G)$, is the minimum
$k$, such that $G$ admits a $k$-queue layout. Accordingly, the queue number of a class of graphs is the maximum queue number over all its members.

\section{Sketch of the involved techniques}
\label{sec:all-outlines}

In the following, we sketch the main aspects of the three algorithms mentioned in \cref{sec:introduction} that are involved in the approach by Dujmović et al.~\cite{DBLP:journals/jacm/DujmovicJMMUW20} to achieve the bound of $49$ on the queue number of planar graphs. 
The first is an algorithm by Dujmović, P{\'o}r, and Wood~\cite{DBLP:journals/dmtcs/DujmovicPW04} to compute $2$-queue layouts of outerplanar graphs (\cref{sec:outerplanar-outline}). The second one is by Alam et al.~\cite{DBLP:journals/algorithmica/AlamBGKP20} to compute $5$-queue layouts of planar $3$-trees (\cref{sec:planar-3-trees-outline}). The last one is the actual algorithm by Dujmović et al.~\cite{DBLP:journals/jacm/DujmovicJMMUW20} to compute $49$-queue layouts of planar graphs (\cref{sec:planar-outline}).

\subsection{Outerplanar Graphs}
\label{sec:outerplanar-outline}

The main ingredient of the algorithm by Dujmović, P{\'o}r and Wood~\cite{DBLP:journals/dmtcs/DujmovicPW04}  is an algorithm to obtain a straight-line drawing $\Gamma(G)$ of a maximal outerplane graph $G$ whose output can be transformed into a $2$-queue layout of $G$.
The recursive construction of $\Gamma(G)$ maintains the following invariant~properties:%
\begin{myenumerate}{(O.\arabic*)}
\item\label{outer:1} The cycle delimiting the outerface consists of two strictly $x$-monotone paths, referred to as \emph{upper} and \emph{lower envelopes}, respectively.
\item\label{outer:2} The $y$-coordinates of the endvertices of each edge differ by either one (\emph{span-$1$} edge) or two (\emph{span-$2$} edge). 
\end{myenumerate}

\begin{figure}[t]
   \centering
   \begin{subfigure}{0.48\textwidth}
        \centering
        \includegraphics[page=1]{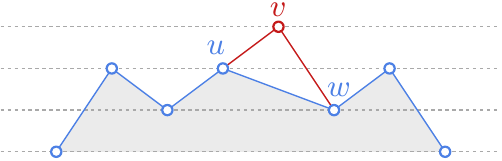}
        \caption{}
    \label{fig:outerplanar_augment_1}
    \end{subfigure}
    \hfil
    \begin{subfigure}{0.48\textwidth}
        \centering
        \includegraphics[page=2]{figures/outerplanar.pdf}
        \caption{}
        \label{fig:outerplanar_augment_2}
    \end{subfigure}
    \caption{Introducing a degree-$2$ vertex $v$ along the upper envelope, when its two neighbors $u$ and $w$ are connected with (a)~a span-1 edge, and
    (b)~a span-2 edge.}
    \label{fig:outerplanar}
\end{figure}

\noindent To maintain \ref{outer:1} and~\ref{outer:2}, Dujmović et al.\ adopt an approach in which at each recursive step a vertex of degree $2$ is added to the already constructed drawing; see \cref{fig:outerplanar}.
The base of the recursion consists of a triangle\footnote{In \cite{DBLP:journals/dmtcs/DujmovicPW04}, Dujmović et al.\ consider a single edge at the base of the recursion. To ease the presentation, we slightly modify their approach.} that can be trivially drawn with two span-$1$ edges and one span-$2$ edge satisfying Invariants~\ref{outer:1} and~\ref{outer:2}. Assume that $G$ has $n > 3$ vertices. Since $G$ is biconnected outerplane, it contains a vertex $v$ of degree $2$. Removing $v$ yields a biconnected outerplane graph $G'$ with $n-1$ vertices, which recursively admits a drawing $\Gamma(G')$ satisfying Invariants~\ref{outer:1} and \ref{outer:2}. By Invariant~\ref{outer:1}, none of the edges in $\Gamma(G')$ is drawn vertically. To obtain drawing $\Gamma(G)$ of $G$ vertex $v$ is introduced in $\Gamma(G')$ as follows. Let $u$ and $w$ be the neighbors of $v$ in $G$. Since $G$ is maximal outerplane, $(u,w)$ is an edge of $G$ that lies on the outerface of $\Gamma(G')$. By Invariant~\ref{outer:2}, $(u,w)$ is either a span-$1$ or a span-$2$ edge. Assume that $(u,w)$ is along the upper envelope of $\Gamma(G')$; the case where it is along the lower envelope is symmetric. Assume first that $(u,w)$ has span $1$ in $\Gamma(G')$ and w.l.o.g.\ that $y(u)=y(w)+1$. Then, vertex $v$ is placed such that $y(v)=y(u)+1$ and $x(v)=\frac{1}{2}(x(u)+x(w))$; see \cref{fig:outerplanar_augment_1}. Hence, $(u,v)$ and $(v,w)$ have span $1$ and $2$, respectively, which implies that Invariant~\ref{outer:2} is maintained. Since edge $(u,w)$ of the upper envelope of $\Gamma(G')$ is replaced by the $x$-monotone path $u \rightarrow v \rightarrow w$ in $\Gamma(G)$, Invariant~\ref{outer:1} is also maintained. Assume now that $(u,w)$ has span $2$ in $\Gamma(G')$ and w.l.o.g. that $y(u)=y(w)+2$. Then, vertex $v$ is placed such that $y(v)=y(u)-1$ and $x(v)=\frac{3x(u)+x(w)}{4}$; see \cref{fig:outerplanar_augment_2}. This implies that both edges $(u,v)$ and $(v,w)$ have span $1$. Similarly to the previous case, one can argue that Invariants~\ref{outer:1} and~\ref{outer:2} are maintained.
Drawing $\Gamma(G)$ is transformed to a $2$-queue layout of $G$ as follows:
\begin{enumerate}[label=(\roman*)]
\item for any two vertices $u$ and $v$ of $G$, $u \prec v$ if and only if either $y(u) > y(v)$, or $y(u)=y(v)$ and $x(u)<x(v)$ in $\Gamma(G)$, 
\item edge $(u,v)$ is assigned to the first (second) queue if it has span $1$ ($2$, respectively)~in~$\Gamma(G)$.
\end{enumerate}

\subsection{Planar 3-trees}
\label{sec:planar-3-trees-outline}

Alam et al.~\cite{DBLP:journals/algorithmica/AlamBGKP20} adopt a \emph{peeling-into-levels} approach~\cite{DBLP:conf/focs/Heath84}~to~produce a $5$-queue layout of a maximal plane $3$-tree $H$. Initially, the vertices of $H$ are partitioned into \emph{levels} $L_0,\ldots,L_\lambda$ with $\lambda \geq 1$, such that $L_0$-vertices are incident to the outer face of $H$, while $L_{i+1}$-vertices are in the outer face of the subgraph of $H$ obtained by the removal of all vertices in $L_0,\ldots,L_i$. The edges of $H$ are partitioned into \emph{level} and \emph{binding}, depending on whether their endpoints are on the same or on consecutive levels. As each connected component of the subgraph $H_i$ of $H$ induced by the edges of level $L_i$ is an internally triangulated outerplane graph, it is embeddable in two queues. This implies that each connected component $c$ of $H_{i+1}$ (which is outerplane) lies in the interior of a triangular face of $H_i$, therefore there are exactly three vertices of $H_i$ that are connected to $c$.
The constructed $5$-queue layout of $H$ satisfies the following invariant properties:
\begin{myenumerate}{(T.\arabic*)}
\item \label{t:order} The linear order $\lo{H}$ is such that all vertices of level $L_j$ precede all vertices of level $L_{j+1}$ for every $j=0,\ldots,\lambda-1$;
\item \label{t:order_component} Vertices of each connected component of level $L_j$ appear consecutively in $\lo{H}$ for every $j=0,\ldots,\lambda$;
\item \label{t:level} Level edges of each of the levels $L_{0},\ldots,L_\lambda$ are assigned to two queues denoted by $\Qh_0$ and $\Qh_1$;
\item \label{t:bind} For every $j=0,\ldots,\lambda-1$, the binding edges between $L_j$ and $L_{j+1}$ are~assigned to three queues $\Qh_2$, $\Qh_3$ and $\Qh_4$ as follows. For each connected~component $c$ of $H_{j+1}$, let $x$, $y$ and $z$ be its three neighbors in $H_j$ so that $x\;\lo{H}\; y\;\lo{H}\; z$.
Then, the binding edges between $L_j$ and $L_{j+1}$ incident to $c$ are assigned to $\Qh_2$, $\Qh_3$ and $\Qh_4$ if they lead to $x$, $y$ and $z$, respectively.
\end{myenumerate}

\subsection{General Planar Graphs}
\label{sec:planar-outline}

Central in the algorithm by Dujmović et al.~\cite{DBLP:journals/jacm/DujmovicJMMUW20} is the notion of $H$-partition\footnote{To avoid confusion with notation used earlier, note that, in the scope~of~the algorithm by Dujmović et al.~\cite{DBLP:journals/jacm/DujmovicJMMUW20}, graph $H$ denotes a plane $3$-tree, as we will~shortly~see.}, defined as follows. Given a graph $G$, an \emph{$H$-partition} of $G$ is a partition of the vertices of $G$ into sets $A_x$ with $x\in V(H)$, called \emph{bags}, such that for each edge $(u,v)$ of $G$ with $u\in A_x$ and $v \in A_y$ either $x=y$ holds or $(x,y)$ is an edge of $H$. In the former case, $(u,v)$ is called \emph{intra-bag} edge, while in the latter case \emph{inter-bag}.
A \emph{BFS-layering} of $G$ is a partition $\mathcal{L}=(V_0,V_1,\ldots)$ of its vertices according to their distance from a specific vertex $r$ of $G$, 
i.e., it is a special type of $H$-partition, where $H$ is a path and each bag $V_i$ corresponds to a \emph{layer}. In this regard, an intra-bag edge is called \emph{intra-layer}, while an inter-bag edge is called \emph{inter-layer}\footnote{Dujmović et al.~\cite{DBLP:journals/jacm/DujmovicJMMUW20} refer to the intra- and  inter-\emph{layer} edges as intra- and inter-\emph{level} edges, respectively. We adopt the terms intra- and inter-layer edges to avoid confusion with the different type of leveling used in the algorithm of Alam et al.~\cite{DBLP:journals/algorithmica/AlamBGKP20}.}. An $H$-partition has \emph{layered-width} $\ell$ with respect to a BFS-layering  $\mathcal{L}$ if each bag of $H$ has at most $\ell$ vertices on each layer of $\mathcal{L}$. 

\begin{figure}[t]
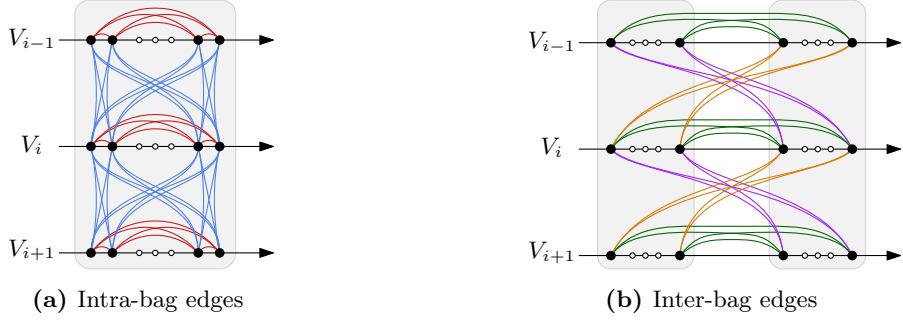

   \centering
   \begin{subfigure}{0.45\textwidth}
        \centering
        \includegraphics[page=1]{figures/planar_queues_fig_edges1}
        \caption{Intra-bag edges}
    \label{fig:intra_bag_edges}
    \end{subfigure}
    \hfil
    \begin{subfigure}{0.54\textwidth}
        \centering
        \includegraphics[page=2]{figures/planar_queues_fig_edges1}
        \caption{Inter-bag edges}
        \label{fig:inter_bag_edges}
    \end{subfigure}
    \caption{Illustration of
    (a)~Intra-bag edges; the intra-layer ones are red, 
    while the inter-layer ones are blue, and 
    (b)~inter-bag edges; the intra-layer ones are green, 
    while the inter-layer ones are purple (forward) and orange (backward).}
    \label{fig:categories-edges}
\end{figure}

\begin{lemma}[Dujmović et al.~\cite{DBLP:journals/jacm/DujmovicJMMUW20}]
\label{lem:hpartitionqueus}
For all graphs $G$ and $H$, if $H$ admits a $k$-queue layout and $G$ has an $H$-partition 
of layered-width $\ell$ with respect to some layering $\mathcal{L}=(V_0,V_1,\ldots)$ of $G$, then $G$ admits a $(3\ell
k+\lfloor\frac{3}{2}\ell\rfloor)$-queue layout using vertex order
$\overrightarrow{V_0},\overrightarrow{V_1},\dots$, where $\overrightarrow{V_i}$ is some 
order of $V_i$. In particular,
\begin{equation}\label{eq:upper_bound}
\mathrm{qn}(G) \leq 3 \ell \,\mathrm{qn}(H) + \lfloor \tfrac{3}{2}\ell \rfloor.
\end{equation}
\end{lemma}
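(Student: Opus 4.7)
The plan is to construct the vertex order $\overrightarrow{V_0},\overrightarrow{V_1},\dots$ by fixing, within each layer $V_i$, the order that first lists the nonempty bags $A_x\cap V_i$ in the order inherited from a $k$-queue layout $\prec_H$ of $H$, and inside each such bag-layer block (of size at most $\ell$) numbers the vertices $1,2,\ldots$ arbitrarily. Edges of $G$ then split into an intra-bag family and an inter-bag family, for which I would allocate $\lfloor\tfrac{3}{2}\ell\rfloor$ and $3\ell k$ queues respectively.

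For the intra-bag family I treat intra-layer and inter-layer edges separately. Each intra-bag intra-layer edge lives inside a single block $A_x\cap V_i$ of at most $\ell$ consecutive vertices, and distinct such blocks occupy pairwise disjoint intervals of the global order, so the queue number of $K_\ell$, namely $\lfloor\ell/2\rfloor$, suffices for the whole subfamily. Each intra-bag inter-layer edge joins $A_x\cap V_i$ to $A_x\cap V_{i+1}$ by the BFS property, and I would assign it to the queue indexed by the position of its endpoint inside $A_x\cap V_{i+1}$, using $\ell$ further queues.

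For the inter-bag family I introduce a queue for every triple (queue of $\prec_H$, layer-type, in-bag position), where the layer-type is one of intra-layer, inter-layer-forward, or inter-layer-backward, and the position ranges over $\{1,\ldots,\ell\}$. An inter-bag edge $(u,v)$ with $u\in A_x$ and $v\in A_y$ corresponds to the $H$-edge $(x,y)$; assuming $x\prec_H y$, I assign it to the queue determined by the $\prec_H$-queue of $(x,y)$, its layer-type, and the position of the endpoint in $A_y$. This produces exactly $3\ell k$ queues.

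The substantive step is rainbow-freeness of each queue. For the intra-bag queues it follows from a short check that exploits disjointness of bag-layer blocks, together with the observation that two intra-bag inter-layer edges sharing the same in-bag position in the upper block either share an endpoint or form a disjoint or crossing (non-nested) pattern. For the inter-bag queues I would show that a hypothetical nesting $u_1\prec u_2\prec v_2\prec v_1$ of two same-queue edges projects, via the bag map and the induced per-layer order, to a nesting $x_1\prec_H x_2\prec_H y_2\prec_H y_1$ of the corresponding $H$-edges, contradicting that they share a queue of $\prec_H$; the common layer-type, the common in-bag position, and the BFS property rule out the remaining potentially-nested configurations. The main obstacle is making this projection argument handle uniformly the three layer-types and the degenerate cases where $x_1=x_2$ or $y_1=y_2$, which calls for a careful but routine case split.
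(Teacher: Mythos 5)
Your overall approach matches the paper's sketch of the proof: order each layer by bag-blocks following the given $k$-queue order of $H$, then split the edges into intra-bag (budget $\lfloor \ell/2\rfloor + \ell = \lfloor\tfrac{3}{2}\ell\rfloor$) and inter-bag (budget $3\ell k$, split by layer-type and $H$-queue). Your intra-bag accounting is correct and mirrors~\ref{e:in_b_in_l}--\ref{e:in_b_out_l}.

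The gap is in the explicit per-edge queue label you propose for inter-bag edges. You assign the edge $(u,v)$, with $u\in A_x$, $v\in A_y$ and $x\prec_H y$, to a queue determined by the $H$-queue of $(x,y)$, the layer-type, and the position of $v$ inside its block $A_y\cap V_{\cdot}$. This does not prevent nesting. Take two same-type edges $(u_1,v_1)$ and $(u_2,v_2)$ with $x_1=x_2=x$, $y_1\neq y_2$ and $x\prec_H y_2\prec_H y_1$. The $H$-edges $(x,y_1)$ and $(x,y_2)$ share the endpoint $x$, hence they never nest under $\prec_H$ and may perfectly well lie in the same $H$-queue; the ``projection to an $H$-nesting'' you appeal to therefore yields no contradiction. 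Meanwhile $v_1$ and $v_2$ sit in the two disjoint blocks $A_{y_1}\cap V_{\cdot}$ and $A_{y_2}\cap V_{\cdot}$, so their block positions are unrelated and can coincide, putting the two edges into the same queue. Yet with $u_1,u_2$ two distinct vertices of the block $A_x\cap V_{\cdot}$ (possible as soon as $\ell\geq 2$) ordered so that $u_1\prec_G u_2\prec_G v_2\prec_G v_1$, the edges genuinely nest. Switching to the position of $u$ in $A_x$ produces the symmetric failure with $y_1=y_2$ and $x_1\neq x_2$. The argument actually needed, which is what Property~\ref{p:inter-rainbow} records, is a global bound rather than a local label: within each fixed pair (layer-type, $H$-queue) the \emph{maximum rainbow} has size at most $\ell$, because any $r$-rainbow of such edges forces either all of its $u$-endpoints or all of its $v$-endpoints into a single block of size at most $\ell$. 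One then invokes the characterization of Heath and Rosenberg (for a fixed vertex order, the number of queues required equals the maximum rainbow size) to obtain the $\ell$-way split, instead of exhibiting a per-edge position index.
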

In the proof of \cref{lem:hpartitionqueus}, the order  of the vertices of $G$ on each layer of $\mathcal{L}$ is defined as follows. Let $x_1, \dots, x_h$ be the vertices of $H$ as they appear in a $k$-queue layout $\ql{H}$ of $H$ and let  $A_{x_1}, \dots, A_{x_h}$ be the corresponding bags of the $H$-partition. Then, the order $\overrightarrow{V_i}$ of each layer $V_i$ with $i\geq 0$ is:
\[
\overrightarrow{V_i}=A_{x_1}\cap V_i, A_{x_2}\cap V_i, \dots, A_{x_h}\cap V_i
\]
where each subset $A_{x_j}\cap V_i$ is ordered arbitrarily. This gives the total order $\lo{G}$ for the vertices of $G$. The edge-to-queue assignment, which completes the construction of the queue layout $\ql{G}$ of $G$,  exploits the following two properties: 
\begin{myenumerate}{(P.\arabic*)}
\item\label{p:intra-bag-nest} Two intra-bag edges nest in $\lo{G}$, only if they belong to the same bag; see blue and red edges in \cref{fig:intra_bag_edges}.
\item\label{p:inter-layer-nest} Two inter-layer edges nest in $\lo{G}$, only if their endpoints belong to the~same pair of consecutive layers of $\mathcal{L}$; see blue, purple and orange edges in \cref{fig:categories-edges}.
\end{myenumerate}
For the edge-to-queue assignment, the edges of $G$ are classified into four categories given by the bags of the $H$-partition and the layers of $\mathcal{L}$. We start with edges whose endpoints belong to the same bag (i.e., intra-bag edges); see \cref{fig:intra_bag_edges}.
\begin{myenumerate}{(E.\arabic*)}
\item \label{e:in_b_in_l} Intra-layer intra-bag edges of $G$ are assigned to at most $\lfloor \frac{\ell}{2} \rfloor$ queues, as the queue~number of $K_\ell$ is $\lfloor \frac{\ell}{2} \rfloor$~\cite{DBLP:conf/focs/Heath84}; see red edges in \cref{fig:intra_bag_edges}.
\item \label{e:in_b_out_l} Inter-layer intra-bag edges of $G$ are assigned to at most $\ell$ queues, as the queue~number of $K_{\ell, \ell}$  is $\ell$, when all vertices of the first bipartition precede those of the second; see blue edges in \cref{fig:intra_bag_edges}.
\end{myenumerate}

\noindent The remaining edges of $G$ connect vertices of different~bags (i.e., inter-bag edges); see \cref{fig:inter_bag_edges}. We further partition the inter-layer inter-bag edges into two categories. Let $(u,v)$ be an inter-layer inter-bag edge with $u \in A_x \cap V_i$ and $v \in A_y \cap V_{i+1}$, for some $i\geq0$. Then $(u,v)$ is \emph{forward}, if $x\; \lo{H}\; y$ holds in $\ql{H}$; otherwise, it is \emph{backward}; see purple and orange edges in \cref{fig:inter_bag_edges}, respectively.
For all inter-bag edges, in total, $3\ell k$ queues suffice (see~\cite[Lemma~9]{DBLP:journals/jacm/DujmovicJMMUW20} for~details).
\begin{myenumerate}{(E.\arabic*)}\setcounter{enumi}{2}
\item \label{e:out_b_in_l} Intra-layer inter-bag edges of $G$ are assigned to at most $\ell k$ queues; on each layer, an edge of $H$ corresponds to a subgraph of $K_{\ell, \ell}$, where the first bipartition precedes the second; see green edges in \cref{fig:inter_bag_edges}.
\item \label{e:out_b_out_l_for} Forward inter-layer inter-bag edges of $G$ are assigned to at most $\ell k$~queues; for two consecutive layers, an edge of $H$  corresponds to a subgraph of $K_{\ell, \ell}$, where the first bipartition precedes the second; see purple edges of \cref{fig:inter_bag_edges}.
\item \label{e:out_b_out_l_back} Symmetrically all backward inter-layer inter-bag edges of $G$ are assigned to at most $\ell k$ queues; see orange edges in \cref{fig:inter_bag_edges}.
\end{myenumerate}

\noindent The next property follows from the proof of Lemma 9 in \cite{DBLP:journals/jacm/DujmovicJMMUW20}:
\begin{myenumerate}{(P.\arabic*)}\setcounter{enumi}{2}
\item \label{p:inter-rainbow} 
For $1 \leq i \leq r$, let $(u_i,v_i)$ be an edge of $G$, such that $u_i\; \lo{G}\; v_i$, $u_i \in A_{x_i}$ and $v_i \in A_{y_i}$. If all these $r$ edges belong to one of \ref{e:out_b_in_l}-\ref{e:out_b_out_l_back} and form an $r$-rainbow in $\lo{G}$, while edges 
$(x_1,y_1),\ldots,(x_r,y_r)$ of $H$ are assigned to the same queue in $\ql{H}$, then $r \leq \ell$ and either $u_1,\ldots,\allowbreak u_r$ or $v_1,\ldots, v_r$ belong to the same bag of the $H$-partition of $G$.
\end{myenumerate}

\noindent If $G$ is maximal plane, few more ingredients are needed to apply \cref{lem:hpartitionqueus}.
A \emph{vertical path} of $G$ in a BFS-layering $\mathcal{L}$ is a path $P= v_0,\dots, v_k$ of $G$ consisting only of edges of the BFS-tree of $\mathcal{L}$ and such that if $v_0$ belongs to $V_i$ in $\mathcal{L}$, then $v_j$ belongs to $V_{i+j}$, with $j=1,\ldots,k$. Further, we say that $v_0$ and $v_k$ are the \emph{first} and \emph{last} vertices of $P$.
A \emph{tripod} of $G$ consists of up to three pairwise vertex-disjoint vertical paths in  $\mathcal{L}$ whose last vertices form a clique of size at most $3$ in $G$. We refer to this clique as the \emph{base} of the tripod.
 Dujmović et al.~\cite{DBLP:journals/jacm/DujmovicJMMUW20} showed that for any BFS-layering $\mathcal{L}$, $G$ admits an $H$-partition with the following properties:
\begin{myenumerate}{(P.\arabic*)}\setcounter{enumi}{3}
\item \label{prp:k} $H$ is a planar $3$-tree and thus $\ql{H}$ is a $k$-queue layout with $k \leq 5$~\cite{DBLP:journals/algorithmica/AlamBGKP20}.
\item \label{prp:l} Its layered-width $\ell$  is at most $3$, since each bag induces a \emph{tripod} in $G$, whose base is a triangular face of $G$, if it is a $3$-clique.
\end{myenumerate}
Properties~\ref{prp:k} and~\ref{prp:l} along with \cref{eq:upper_bound} imply that the queue number of planar graphs is at most $3 \cdot 3 \cdot 5 + \lfloor \tfrac{3}{2}\cdot3 \rfloor=49$.

\section{Refinements of the involved techniques}
\label{sec:all-refinements}
In this section, we present refinements of the algorithms outlined in \cref{sec:all-outlines} that will allow us to reduce the upper bound on the queue number of planar~graphs.

\subsection{Outerplanar Graphs}
\label{sec:outerplanar-refine}

We modify the algorithm by Dujmović et al.~\cite{DBLP:journals/dmtcs/DujmovicPW04} outlined in \Cref{sec:outerplanar-outline} to guarantee two additional properties (stated in \cref{lem:side-top}) of the outerplanar drawing. To this end, besides Invariants~\ref{outer:1} and~\ref{outer:2}, we maintain a third~one: 

\begin{myenumerate}{(O.\arabic*)}\setcounter{enumi}{2}
\item\label{outer:3}The lower envelope consists of a single edge. 
\end{myenumerate}

To maintain Invariant~\ref{outer:3}, we observe that a biconnected maximal outerplane graph with at least four vertices contains at least two non-adjacent degree-$2$ vertices. Let $x$ be such a degree-$2$ vertex of $G$, which we assume to be fixed in the recursive construction of $\Gamma(G)$. In particular, by our previous observation, we can always remove a degree-$2$ vertex that is different from $x$ at every recursive step. This ensures that $x$ will eventually appear in the triangle $\mathcal{T}$ that is drawn at the base of the recursion. We draw $\mathcal{T}$, such that $x$ is its bottommost vertex while its two incident edges are of span-1 and span-2 as follows. Let $y$ and $z$ be the other two vertices of $\mathcal{T}$. We draw $x$, $y$ and $z$ at $(2,0)$, $(0,1)$ and $(1,2)$, respectively. This implies that $(x,y)$ forms the lower envelope of $\mathcal{T}$, while $(y,z)$ and $(z,x)$ form the upper one. This guarantees that Invariant~\ref{outer:3} is maintained at the base of the recursion. Assume now that drawing $\Gamma(G')$ obtained by removing $v$ from $G$ satisfies Invariant~\ref{outer:3}, such that the edge $(x,y)$ of $\mathcal{T}$ forms the lower envelope of $\Gamma(G')$. Since $(x,y)$ belongs to the outer face of $G$, it follows that $v$ is incident to two vertices of the upper envelope of $\Gamma(G')$. So, after the addition of $v$ to $\Gamma(G')$ in order to obtain $\Gamma(G)$, edge $(x,y)$ forms the lower envelope of $\Gamma(G)$, as~desired.

Let $\langle u,v,w \rangle$ be a face of  $\Gamma(G)$ such that $y(u)-y(w)=2$ and $y(u)-y(v)=y(v)-y(w)=1$. We refer to vertices $u$, $v$ and $w$ as the \emph{top}, \emph{middle} and \emph{bottom} vertex of the face, respectively\footnote{Alam et al.~\cite{DBLP:journals/algorithmica/AlamBGKP20} refer to the middle vertex of a triangular face in $\Gamma(G)$ as its \emph{anchor}.}. Further, we say that face $\langle u,v,w \rangle$ is a \emph{bottom}, \emph{side} and \emph{top triangle} for vertices $u$, $v$ and $w$, respectively.

In the next lemma, we prove two properties of drawing $\Gamma(G)$. Part~\ref{i:top} of \cref{lem:side-top} requires Invariant~\ref{outer:3}, i.e., it does not necessarily hold for all drawings obtained by the algorithm by Dujmović et al.~\cite{DBLP:journals/dmtcs/DujmovicPW04}. On the other hand, part~\ref{i:side} of \cref{lem:side-top} holds for drawings that do not necessarily satisfy Invariant~\ref{outer:3}, i.e., it is a property of the original algorithm by Dujmović~et~al.~\cite{DBLP:journals/dmtcs/DujmovicPW04}. 

\begin{lemma}\label{lem:side-top}
Let $\Gamma(G)$ be an outerplanar drawing satisfying Invariants~\ref{outer:1}--\ref{outer:3} of a biconnected maximal outerplane graph $G$. Then, each vertex of $G$~is 
\begin{enumerate}[label=\normalfont(\alph*), align=left]
\item\label{i:top} the top vertex of at most two triangular faces of $\Gamma(G)$ and
\item\label{i:side} the side vertex of at most two triangular faces of $\Gamma(G)$. 
\end{enumerate}
\end{lemma}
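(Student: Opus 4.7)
The plan is to prove the two parts via largely different arguments: part~(a) rests on the recursive construction together with Invariant~\ref{outer:3}, while part~(b) follows from a purely geometric argument around each vertex that does not use (O.3), which is precisely why (b) is inherited from the original algorithm of Dujmović et al.~\cite{DBLP:journals/dmtcs/DujmovicPW04}.

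For part~(a), I would first establish the key sub-claim that every vertex $v$ has \emph{at most one} neighbor at $y$-coordinate $y(v)-2$, i.e.\ at most one span-$2$ edge in which $v$ is the upper endpoint. I would prove this by induction on the recursive construction of $\Gamma(G)$ described in \cref{sec:outerplanar-outline} and refined in \cref{sec:outerplanar-refine}. The base case is the initial triangle $\mathcal{T}$, whose only span-$2$ edge $(x,z)$ has the topmost vertex $z$ as its upper endpoint, so only $z$ owns such a neighbor. For the inductive step, Invariant~\ref{outer:3} guarantees that every new vertex is introduced along the upper envelope; in Case~1 a single new span-$2$ edge is created, and its upper endpoint is the freshly added vertex itself, whereas in Case~2 only two span-$1$ edges are created. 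Thus no previously existing vertex ever acquires a new neighbor at $y$-distance $-2$. Given the sub-claim, any triangular face with $v$ as top vertex must have the form $\langle v,m,b\rangle$ with $y(m)=y(v)-1$ and $y(b)=y(v)-2$, so all such faces share the same edge $(v,b)$; since an edge of a planar drawing bounds at most two internal faces, at most two such faces can exist.

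For part~(b), I would analyze the cyclic angular order of the neighbors of $v$ around $v$ in $\Gamma(G)$. By Invariant~\ref{outer:2} no edge is horizontal, so no neighbor lies on the line $y=y(v)$, and the neighbors split into those strictly above and those strictly below $v$, which occupy the upper and lower open half-planes respectively. Hence, in the cyclic angular order the two groups each form a single contiguous arc, separated by at most two transition positions. A face has $v$ as its side vertex precisely when its two other vertices lie on opposite sides of the horizontal line through $v$, i.e.\ at such a transition; at each transition, the two consecutive neighbors $t$ (above) and $b$ (below) are joined by an edge whose span is at most $2$ by Invariant~\ref{outer:2}, which forces $y(t)=y(v)+1$ and $y(b)=y(v)-1$, so $v$ is indeed the middle of the face $\langle t,v,b\rangle$. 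Since $v$ lies on the outer face, restricting the cyclic order to the linear sub-order that avoids the outer angular gap can only reduce the number of transitions, leaving at most two side faces in total.

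The main obstacle is the sub-claim underlying part~(a): without Invariant~\ref{outer:3} the construction could also add vertices below the lower envelope, which would create new span-$2$ edges whose upper endpoint is an \emph{already existing} vertex, breaking the ``at most one neighbor at $y(v)-2$'' property. Ensuring that (O.3) is preserved by the modified construction (which relies on a biconnected maximal outerplanar graph on at least four vertices admitting two non-adjacent degree-$2$ vertices) and then exploiting it cleanly in the induction is the crux of this half of the lemma; part~(b), by contrast, depends only on (O.2) and the fact that $v$ lies on the outer face, and hence transfers verbatim to the original drawings of Dujmović et al.
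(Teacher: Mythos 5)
Your proof is correct. Part~\ref{i:top} is essentially the paper's argument recast as an explicit induction: both reduce the claim to the sub-claim that every vertex is the upper endpoint of at most one span-$2$ edge, both prove that sub-claim by tracking the recursive construction under Invariant~\ref{outer:3} (a new span-$2$ edge, when created, has the newly inserted vertex as its upper endpoint), and both finish by observing that all triangular faces with $v$ on top must share the unique span-$2$ edge below $v$, and a planar edge bounds at most two faces.

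Part~\ref{i:side} is where you genuinely diverge from the paper. The paper places two points $p_1,p_2$ at $y=y(u)$ just to the left and right of $u$, argues that the span-$2$ edge of each side-triangle of $u$ must separate one of $p_1,p_2$ into the interior of that triangle, and then derives a contradiction from the fact that each of $p_1,p_2$ can lie in at most one face. You instead work combinatorially with the rotation at $v$: Invariant~\ref{outer:2} forbids horizontal edges, so the neighbors of $v$ split into two nonempty angular arcs (strictly above and strictly below the line $y=y(v)$), giving exactly two above/below transitions in the cyclic order; each side-triangle of $v$ contributes one such transition (its edges $(v,t)$ and $(v,b)$ are consecutive at $v$, with $t$ above and $b$ below), so there are at most two of them. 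That argument is arguably tidier than the paper's $\varepsilon$-ray argument and avoids any metric considerations beyond the above/below dichotomy. One minor point worth tightening: you reason from ``transition'' to ``side-triangle'' and need the facial slot at that transition to be an internal (hence triangular) face before invoking Invariant~\ref{outer:2} on $(t,b)$; the cleaner direction is to go from ``side-triangle'' to ``transition'' and then bound the number of transitions by two, which makes the outer-face caveat unnecessary. Either way the count comes out to at most two, so the conclusion stands.
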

\begin{proof}
For \ref{i:top}, consider a vertex $u$ of $G$. If $u$ is the top vertex of a face, then~$u$ is incident to a span-$2$ edge $(u,v)$ with $y(u)>y(v)$. By Invariant~\ref{outer:3}, $u$ is a successor of $v$ in the recursive approach by Dujmović et al.~\cite{DBLP:journals/dmtcs/DujmovicPW04}, i.e., when $u$ is placed in $\Gamma(G)$, vertex $v$ belongs to the upper envelope. We now claim that $u$ cannot be incident to two edges $(u,v)$ and $(u,v')$ with the properties mentioned above; this claim implies the lemma. Assuming the contrary, by Invariant~\ref{outer:2}, when $u$ is placed in $\Gamma(G)$ at most one edge incident to $u$ has span~$2$. So, at most one of $(u,v)$ and $(u,v')$ is drawn when $u$ is placed in $\Gamma(G)$, which implies that at least one of $v$ and $v'$, say $v'$, is a successor of $u$. Thus, $y(u)<y(v')$ holds; a contradiction. 

For \ref{i:side}, assume for a contradiction that $G$ contains a vertex $u$, which is the side vertex of  three triangular faces, say $\mathcal{T}_1$, $\mathcal{T}_2$ and $\mathcal{T}_3$, of $\Gamma(G)$. For a significantly small constant $\varepsilon>0$, let $p_1=(x(u)-\varepsilon,y(u))$ and $p_2=(x(u)+\varepsilon,y(u))$, and consider the two horizontal rays $r_1$ and $r_2$ emanating from vertex $u$, such that $r_1$ and $r_2$ contain points $p_1$ and $p_2$, respectively. Since $u$ is side vertex for $\mathcal{T}_1$, $\mathcal{T}_2$ and $\mathcal{T}_3$, it follows by Invariant~\ref{outer:2} that the span-$2$ edge of each of $\mathcal{T}_1$, $\mathcal{T}_2$ and $\mathcal{T}_3$ crosses either $r_1$ or $r_2$; by the choice of $\varepsilon$, we may assume that if a span-$2$ edge of one of $\mathcal{T}_1$, $\mathcal{T}_2$ and $\mathcal{T}_3$ crosses $r_1$ ($r_2$), then point $p_1$ ($p_2$, respectively) lies in the interior of it. On the other hand, by outerplanarity, each of the points $p_1$ and $p_2$ can be contained in at most one of $\mathcal{T}_1$, $\mathcal{T}_2$ and $\mathcal{T}_3$; a contradiction.
\end{proof}

\subsection{Planar 3-trees}
\label{sec:planar-3-trees-refine}

To maintain Invariant~\ref{t:level}, Alam et al.~\cite{DBLP:journals/algorithmica/AlamBGKP20} use the algorithm by Dujmović et al.~\cite{DBLP:journals/dmtcs/DujmovicPW04} to assign the level edges of  $L_0,\ldots,L_\lambda$ of the input plane $3$-tree~$H$ to two queues~$\Qh_0$ and $\Qh_1$, since on each level these edges induce a (not necessarily connected) outerplane~graph. 
Unlike in the original algorithm, in our approach we adopt the modification for the algorithm by Dujmović et al.~\cite{DBLP:journals/dmtcs/DujmovicPW04} introduced in \Cref{sec:outerplanar-refine}. As Invariants~\ref{outer:1} and~\ref{outer:2} are preserved, queues $\Qh_0$ and $\Qh_1$  suffice. 

To maintain Invariant~\ref{t:bind}, Alam et al.~\cite{DBLP:journals/algorithmica/AlamBGKP20} adopt the following assignment scheme for the binding edges between $L_j$ and $L_{j+1}$ to queues $\Qh_2$, $\Qh_3$ and $\Qh_4$, for each $j = 0, \ldots, \lambda-1$. Consider a binding edge $(u,v)$ with $u \in L_j$ and $v \in L_{j+1}$. Then, vertex $u$ belongs to a connected~component $C_u$ of the subgraph $H_j$ of $H$ induced by the level-$L_j$ vertices, while vertex $v$ belongs to a connected component $\mathcal C_v$ of $H_{j+1}$. Further, $C_u$ is outerplane and its $2$-queue layout has been computed by the algorithm by Dujmović et al.~\cite{DBLP:journals/dmtcs/DujmovicPW04}, while $C_v$ resides in the interior of a triangular face $\mathcal{T}_v$ of $C_u$ in the embedding of $H$, such that $u$ is on the boundary~of $\mathcal{T}_v$. 
In the edge-to-queue assignment scheme by Alam et al.~\cite{DBLP:journals/algorithmica/AlamBGKP20}, edge 
$(u,v)$ is assigned to $\Qh_2$, $\Qh_3$ or $\Qh_4$ if and only if $u$ is the top, middle or bottom vertex of $\mathcal{T}_v$, respectively~\cite{DBLP:journals/algorithmica/AlamBGKP20}. 
For a vertex $u\in L_j$ with $j=0,\ldots, \lambda-1$, we denote by $N_2(u)$, $N_3(u)$ and $N_4(u)$ the neighbors of $u$ in $L_{j+1}$ such that the edges connecting them to $u$ are assigned to queue $\Qh_2$, $\Qh_3$ and $\Qh_4$, respectively. The next property follows from \cref{lem:side-top}.

\begin{lemma}\label{lem:at-most-2}
For $j = 0, \ldots, \lambda-1$, let $u\in L_j$ be a vertex of $H$ in our modification of the peeling-into-levels approach by Alam et al.~\cite{DBLP:journals/algorithmica/AlamBGKP20}.
Then, the vertices of $N_2(u)$ precede those of $N_3(u)$, and the vertices of $N_2(u)$ (or of $N_3(u)$) belong to at most two connected components of $H_{j+1}$ (residing within distinct faces of $H_{j}$). 
\end{lemma}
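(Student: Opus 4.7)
The plan is to split the lemma into its two assertions and address each in turn, each anchored on \cref{lem:side-top}. The component-count bound will fall out immediately as a corollary; the precedence claim requires an additional appeal to the ordering convention of the modified peeling-into-levels approach.

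For setup, I would first observe that every $v\in N_2(u)\cup N_3(u)$ belongs to a unique connected component of $H_{j+1}$, which resides inside a unique triangular face of the outerplanar drawing $\Gamma(C_u)$ of the connected component $C_u$ of $H_j$ containing $u$. By the edge-to-queue assignment recalled just before the lemma, $v\in N_2(u)$ (resp.\ $v\in N_3(u)$) precisely when $u$ is the top (resp.\ middle) vertex of this face. Since the top and middle vertices of a triangular face are distinct, the components contributing vertices to $N_2(u)$ are disjoint from those contributing to $N_3(u)$; by Invariant~\ref{t:order_component}, each such component forms a consecutive block in $\lo{H}$, so the precedence claim will reduce to an inter-component statement.

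The component-count claim is then a direct consequence of \cref{lem:side-top}: part~(a) bounds by two the triangular faces of $\Gamma(C_u)$ with $u$ on top, so $N_2(u)$ is spread over at most two components of $H_{j+1}$, each residing in a distinct face of $H_j$; part~(b) analogously bounds by two the faces with $u$ as side (middle) vertex, yielding the same conclusion for $N_3(u)$. For the precedence claim, I would argue that the ordering rule used by the modified peeling-into-levels approach on the components of $L_{j+1}$ places each component residing in a $u$-top face of $\Gamma(C_u)$ before each component residing in a $u$-middle face; together with the consecutiveness and disjointness recorded above, this gives $N_2(u)\prec N_3(u)$.

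The hard part will be the precedence step: one must unpack the $L_{j+1}$-ordering rule with care, and verify that the geometric distinction between $u$-top faces (lying entirely at $y$-levels at most $y(u)$ in $\Gamma(C_u)$) and $u$-middle faces (which straddle $u$ by reaching $y$-level $y(u)+1$) translates into the claimed order in $\lo{H}$. The component-count bound, by contrast, is essentially a restatement of \cref{lem:side-top}.
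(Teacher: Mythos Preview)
Your treatment of the component-count bound is essentially the paper's own argument: the paper, too, observes that $v\in N_2(u)$ forces $u$ to be the top vertex of the face $\mathcal{T}_v$ containing $C_v$, and then invokes \cref{lem:side-top}\ref{i:top} (and analogously \ref{i:side} for $N_3(u)$) to cap the number of such faces---hence components---at two.

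Where you diverge is on the precedence claim $N_2(u)\prec N_3(u)$. The paper does not prove this at all; it simply records that ``the first part of the lemma is proven in~\cite{DBLP:journals/algorithmica/AlamBGKP20}''. In other words, this part is inherited unchanged from the original Alam et al.\ construction and is not affected by the modification of \cref{sec:outerplanar-refine}. Your plan instead proposes to re-derive it from the geometry of $\Gamma(C_u)$ together with the (unspecified) ordering rule for the $L_{j+1}$-components. That is a legitimate route, and your observation that $u$-top faces lie weakly below $y(u)$ while $u$-middle faces reach $y(u)+1$ is correct and relevant; but as you yourself flag, turning this into an actual proof requires unpacking the precise rule by which Alam et al.\ order the components of $H_{j+1}$, and your proposal stops short of doing so. For the purposes of this paper that extra work is unnecessary: a citation to~\cite{DBLP:journals/algorithmica/AlamBGKP20} suffices, since the precedence claim is a property of the unmodified algorithm.
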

\begin{proof}
The first part of the lemma is proven in~\cite{DBLP:journals/algorithmica/AlamBGKP20}.
For the second part, consider a binding edge $(u,v)$ with $v\in N_2(u)$; a similar argument applies when $v \in N_3(u)$. Thus, $u$ is the top vertex of the triangular face $\mathcal{T}_v$ of $H_j$, in which the connected component $C_v$ of $H_{j+1}$ that contains $v$ resides. Since by \cref{lem:side-top}\ref{i:top} vertex $u$ can be the top vertex of at most two triangular faces of $H_j$, there exist at most two connected components of $H_{j+1}$, to which vertex $v$ can belong.   
\end{proof}

\subsection{General Planar Graphs}
\label{sec:planar-refine}

Dujmović et al.~\cite{DBLP:journals/jacm/DujmovicJMMUW20} recursively compute the bags (i.e., the tripods) of the $H$-partition.
Each newly discovered tripod $\tau$ is adjacent to at most three other tripods $\tau_1$, $\tau_2$ and $\tau_3$  already discovered. We say that 
$\tau_1$, $\tau_2$ and $\tau_3$ are the \emph{parents} of $\tau$; see \cref{fig:tripod-1}. 
Also, each non-empty vertical path of $\tau$ is connected to only one of its parents via an edge of the BFS-tree used to construct the BFS-layering $\mathcal{L}$ (black in \cref{fig:tripod-1}). This property gives rise to at most three sub-instances (gray in \cref{fig:tripod-1}), which are processed recursively to compute the final tripod decomposition. 

Note that, in general, one or more vertical paths of a tripod may have no vertices; see \cref{fig:tripod-1-special}. However, by an appropriate augmentation of the input graph we can avoid considering this degenerate case explicitly. More precisely, we first compute a BFS tree $T$ of $G$ rooted at a vertex of the outer face. We next augment $G$ by adding a triangle $f^*$ in the interior of each face $f$ of $G$. By an appropriate triangulation, we can obtain a triangulated plane graph $G^*$ that contains $G$ as a subgraph, such that each of the newly introduced vertices of $f^*$ can be added as a leaf of $T$ to derive a BFS tree $T^*$ of $G^*$.  In this way, each original face $f$ of $G$ gets three leaves of $T^*$ inside it, one attached to each vertex of $f$. By computing a tripod decomposition of $G$ using the tree $T$, one obtains a collection of tripods. Each of these tripods has a ``base'', which is a triangular face $f$ of $G$.  We produce a tripod decomposition of $G^*$ with respect to $T^*$ by extending each tripod with base $f$ so that its base is $f^*$. In this way, we derive a tripod decomposition of $G^*$, in which every tripod has three non-empty vertical paths, as desired. Hence, without loss of generality, we assume in the following that in the tripod decomposition no vertical path of a tripod is empty.

For $i\in \{1,2,3\}$, let $p_i^1$, $p_i^2$ and $p_i^3$ be the three vertical paths of tripod $\tau_i$. Up to renaming, we assume that $\tau$ lies in the cycle bounded by (parts of) $p_1^1$, $p_1^2$, $p_2^1$, $p_2^2$, $p_3^1$ and $p_3^2$ as in \cref{fig:tripod-1}. 
The next properties follow by planarity and the BFS-layering:

\begin{figure}[t]
   \centering
   \begin{subfigure}{0.35\textwidth}
        \centering
        \includegraphics[page=2]{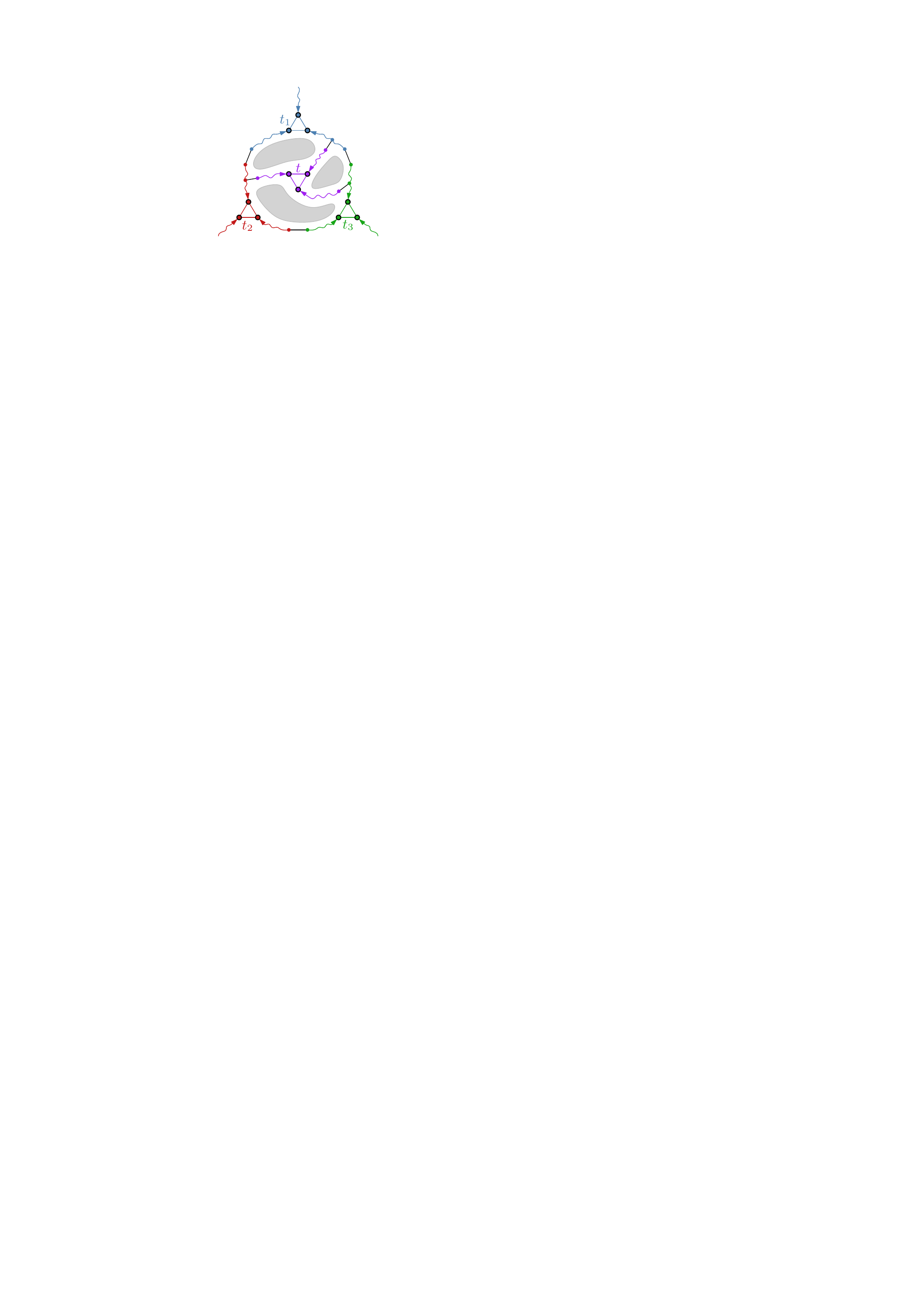}
        \caption{$\tau$ shares no vertex with $\tau_1$, $\tau_2$, $\tau_3$ }
    \label{fig:tripod-1}
    \end{subfigure}
    \hfil
    \begin{subfigure}{0.32\textwidth}
        \centering
       \includegraphics[page=11]{figures/tripod.pdf}
        \caption{$\tau$ shares a vertex with $\tau_3$}
        \label{fig:tripod-1-special}
    \end{subfigure}
    \hfil
    \begin{subfigure}{0.3\textwidth}
        \centering
       \includegraphics[page=3]{figures/tripod.pdf}
        \caption{}
        \label{fig:tripod-2}
    \end{subfigure}
    \caption{
    (a), (b)~Tripod $\tau$ with parents $\tau_1$, $\tau_2$ and $\tau_3$.
    (c)~Illustration for~\ref{prp:t_parent} and~\ref{prp:t_xing_edges}.
    }
    \label{fig:tripods}
\end{figure}

\begin{myenumerate}{(P.\arabic*)}\setcounter{enumi}{5}
\item \label{prp:t_parent} There is no edge connecting a vertex of $\tau$ to a vertex of $p_i^3$ for $i=1,2,3$; see the dotted edge in~\cref{fig:tripod-2}.
\item \label{prp:t_bfs} Let $v_i^p$ be the vertex of vertical path $p$ of $\tau$ on layer $V_i$ of $\mathcal{L}$. For two vertical paths $p$ and $q$ of $\tau$, edge $(v_i^p,v_j^{q})$ belongs to $G$ only if $|i-j|\leq 1$.
\item \label{prp:t_xing_edges} For vertical paths $p$ and $q$ of $\tau$, at most one of the edges $(v_i^p,v_{i+1}^q)$ and $(v_{i+1}^p,v_{i}^q)$ exists in $G$; see the dashed edges of~\cref{fig:tripod-2}.

\end{myenumerate}

\noindent Note that \ref{prp:t_parent}--\ref{prp:t_xing_edges} hold even if $\tau$ has less than three parents, or if the cycle bounding the region of $\tau$ does not contain two vertical paths of each parent~tripod.

In the original algorithm by Dujmović et al.~\cite{DBLP:journals/jacm/DujmovicJMMUW20}, each vertex $v_\tau$ in $H$ corresponds to a tripod $\tau$ in $G$, and an edge $(v_\tau,v_{\tau'})$ exists in $H$, if $\tau$ is a parent of $\tau'$ in $G$, or vice versa. 
Also, $H$ is a connected partial planar $3$-tree, which is arbitrarily~augmented to a maximal planar $3$-tree $H'$ (to compute a 5-queue layout of it).
Here, we adopt a particular augmentation to guarantee an additional property for the graph $H'$ (see \Cref{lem:parent-child}). 
Similarly to the original approach, we contract the vertices of each tripod $\tau$ of $G$ to a single vertex $v_\tau$. However, in our modification, we keep self-loops that occur when an edge of $G$ has both endpoints in $\tau$ (unless this edge belongs to one of its vertical paths), as well as, parallel edges that occur when two vertices of $\tau$ have a common neighbor not~in~$\tau$. Two important properties of this contraction scheme that follow directly from planarity are given below; see \cref{fig:tripods-contract}.

\begin{myenumerate2}{(P.\arabic*)}\setcounter{enumi}{8}
\item \label{prp:t_cyclic_order_all}
The edges around each contracted vertex $v_\tau$ appear in the same clockwise cyclic order as they appear in a clockwise traversal along $\tau$ in $G$. 
\item \label{prp:t_cyclic_order_path} 
The edges having at least one endpoint on the same vertical path of $\tau$ appear consecutively around $v_\tau$; see \cref{fig:tripod-contract-2}.
\end{myenumerate2}

\begin{figure}[t]
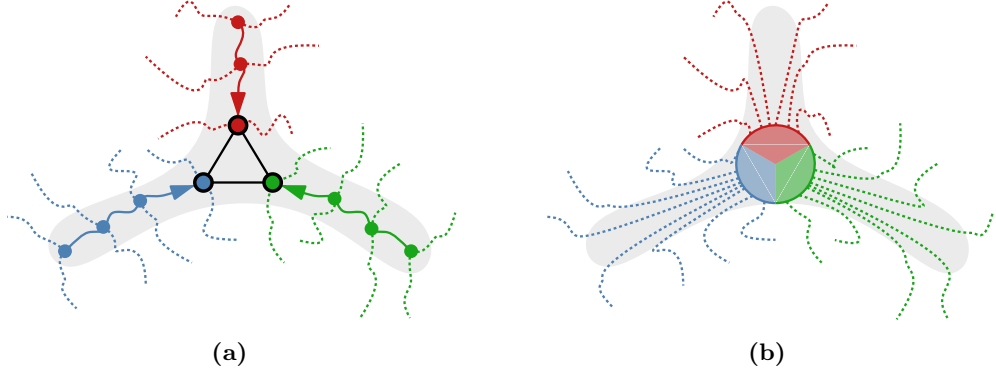

   \centering
   \begin{subfigure}{0.4\textwidth}
        \centering
        \includegraphics[page=7, width=\textwidth]{figures/tripod.pdf}
        \caption{}
    \label{fig:tripod-contract-1}
    \end{subfigure}
    \hfil
    \begin{subfigure}{0.4\textwidth}
        \centering
       \includegraphics[page=8, width=\textwidth]{figures/tripod.pdf}
        \caption{}
        \label{fig:tripod-contract-2}
    \end{subfigure}
    \caption{
    (a)~A tripod $\tau$ in $G$ where the edges incident to its three vertical paths are drawn dotted, and
    (b)~the result after contracting $\tau$ to $v_\tau$.
    }
    \label{fig:tripods-contract}
\end{figure}

\noindent To guarantee simplicity, 
we focus on \emph{homotopic} self-loops and pairs of parallel edges, which contain no vertex either in the interior or in the exterior region that they define. We remove such self-loops and keep one copy of such parallel edges. 
Then, we subdivide each self-loop twice, and for each edge with multiplicity $m>1$, we subdivide all but one of its copies. 
In this way, each vertex $v_\tau$ corresponding to a tripod $\tau$ in $G$ always lies in the interior of a separating $3$-cycle $C$ that contains all the vertices corresponding to its parent tripods on its boundary. 
To see this, observe that if $\tau$ has three parent $\tau_1$, $\tau_2$ and $\tau_3$, then $C$ is formed by $v_{\tau_1}$, $v_{\tau_2}$ and $v_{\tau_3}$, such that $v_\tau$ is connected to each of them. If $\tau$ has two parent $\tau_1$ and $\tau_2$, then $C$ is formed by $v_{\tau_1}$, $v_{\tau_2}$ and a subdivision vertex, such that $v_\tau$ is connected to $v_{\tau_1}$ and $v_{\tau_2}$. Finally, if $\tau$ has only one parent $\tau_1$, then $C$ is formed by $v_{\tau_1}$ and two subdivision vertices, such that $v_\tau$ is connected to~$v_{\tau_1}$.
Since subdivision vertices are of degree $2$, the result is a simple (possibly not maximal) planar 3-tree, which is a supergraph of~$H$. 
To derive $H'$, we augment it to maximal by adding edges, while maintaining its~embedding~\cite{DBLP:journals/corr/abs-1210-8113}.

\begin{lemma}\label{lem:parent-child}
Let $v_\tau$ and $v_{\tau_p}$ be two vertices of $H'$ that correspond to a tripod $\tau$~and to a parent tripod $\tau_p$ of $\tau$ in $G$. 
If $L_i$ and $L_j$ are the levels of $v_\tau$ and $v_{\tau_p}$ in the peeling-into-levels approach for~$H'$, then $i\geq j$.
\end{lemma}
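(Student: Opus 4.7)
\smallskip\noindent\textbf{Proof plan.}

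The plan is to reduce the statement to a graph-distance inequality in $H'$. First I would show that for every vertex $w$ of the maximal planar $3$-tree $H'$, its peeling-into-levels index $\mathrm{lev}(w)$ (the unique $k$ with $w \in L_k$) equals the graph-theoretic distance $d_{H'}(w, V_0)$ from $w$ to the outer-face vertex set $V_0$ of $H'$. Given this identification, the lemma follows by a short Jordan-curve argument, since $v_\tau$ lies strictly inside the separating $3$-cycle $C$ (which contains $v_{\tau_p}$ on its boundary) in the planar embedding of $H'$.

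To establish $\mathrm{lev}(w) = d_{H'}(w, V_0)$, I would prove by induction that
\[
\mathrm{lev}(v) \;=\; 1 + \min_{u \in N_{H'}(v)} \mathrm{lev}(u) \qquad \text{for every } v \notin V_0.
\]
For the ``$\le$'' direction, let $u \in N_{H'}(v)$ realize the minimum, of value $j$. In $H'_j := H' - (L_0 \cup \dots \cup L_{j-1})$ the vertex $u$ lies on the outer face, so at least one face of $H'_j$ incident to $u$ is the outer face; when $L_j$ is peeled, $u$ is removed and its incident faces in $H'_j$ merge, so in particular the two triangular faces of $H'$ adjacent to edge $uv$ (both of which contain $v$) become part of the new outer face of $H'_{j+1}$, yielding $\mathrm{lev}(v) \le j+1$. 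For the ``$\ge$'' direction, for every $i \le j$ each neighbor of $v$ is still present in $H'_i$; hence the rotation of $v$ and its incident triangular faces are unchanged in $H'_i$, and since $v \notin V_0$ none of these triangles is the outer face of $H'_i$, so $v$ is not on the outer face of $H'_i$ and $\mathrm{lev}(v) \ge j+1$.

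Given $\mathrm{lev}(\cdot) = d_{H'}(\cdot, V_0)$, the lemma is immediate. Any path in $H'$ from $v_\tau$ to $V_0$ must pass through a vertex of $C$ by the Jordan-curve theorem, since $C$ separates $v_\tau$ from $V_0$ in the planar embedding. Taking a shortest such path and letting $u \in C$ be its first intersection with $C$, one has $d_{H'}(v_\tau, V_0) = d_{H'}(v_\tau, u) + d_{H'}(u, V_0) \ge 1 + d_{H'}(u, V_0)$ using $v_\tau \ne u$. Since $u$ and $v_{\tau_p}$ both lie on the $3$-cycle $C$, they are adjacent in $H'$ (or coincide), and hence $d_{H'}(v_{\tau_p}, V_0) \le 1 + d_{H'}(u, V_0) \le d_{H'}(v_\tau, V_0)$, giving $j \le i$. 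The most delicate step will be the face-merging argument in the ``$\le$'' direction, where I must verify that peeling $L_j$ indeed causes the two triangular faces of $H'$ adjacent to $uv$ to become part of the new outer face of $H'_{j+1}$.
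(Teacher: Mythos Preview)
Your plan is correct, and the argument goes through as you outline it. The identification $\mathrm{lev}(w)=d_{H'}(w,V_0)$ does hold for maximal plane graphs: your recurrence $\mathrm{lev}(v)=1+\min_{u\in N(v)}\mathrm{lev}(u)$ is exactly the BFS-distance recurrence, and both inequalities you sketch are sound (the ``$\ge$'' direction is fine because, with all neighbours of $v$ still present, every bounded triangular face of $H'$ incident to $v$ remains a bounded face of $H'_i$, so $v$ cannot touch the unbounded face). The Jordan-curve step is also clean once one notes that any vertex strictly inside $C$ is not incident to the unbounded face, hence $V_0\cap\mathrm{int}(C)=\emptyset$; the case $C=V_0$ is trivial since then $v_{\tau_p}\in V_0$ and $j=0$.

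However, this is a genuinely different route from the paper's. The paper never passes through the identity $\mathrm{lev}=d_{H'}(\cdot,V_0)$. Instead it argues directly on the cycle $C$: since every edge of $H'$ is level or binding, the three levels $k\le l\le m$ on $C$ satisfy $m\le k+1$; and since $v_\tau$ lies strictly inside $C$ and (in the inclusion-minimal $C$) is adjacent to all three of its vertices, one gets $\mathrm{lev}(v_\tau)=k+1$ exactly, whence $j\le m\le k+1=i$. So the paper trades your general lemma for a two-line computation that additionally pins down $i$ precisely. Your approach, by contrast, isolates a reusable fact about peeling-into-levels (that it coincides with BFS distance from the outer face in a triangulation) and then finishes with a purely metric argument that does not need $v_\tau$ to be adjacent to every vertex of $C$; it would work for any vertex strictly inside $C$. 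Either proof is acceptable here; yours is a bit longer but arguably more transparent about why separation by $C$ forces the level inequality.
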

\begin{proof}
Let $C$ be the inclusion-minimal separating $3$-cycle of $H'$ containing $v_\tau$ in its interior and all  vertices that correspond to the parent tripods of $\tau$ on its boundary.  Let $L_k$, $L_l$ and $L_m$ be the levels of the three vertices of $C$, with $k \leq l \leq m$, in the peeling-into-levels approach for~$H'$. As $\tau_p$ is a parent~of~$\tau$, $j \in \{k,l,m\}$~holds. Since $C$ is a $3$-cycle and since each edge in the peeling-into-level approach is~either level or binding, $m\leq k+1$ holds. 
The fact that $v_\tau$ lies in the interior of $C$ and~is connected to each of the vertices in $H'$ corresponding to its parent tripods in~$G$, implies that $v_\tau$ is on level $L_{k+1}$, i.e., $i=k+1$. So, $j \leq m \leq k+1=i$ holds.
\end{proof}

As in the original algorithm by Dujmović et al.~\cite{DBLP:journals/jacm/DujmovicJMMUW20}, we compute a $5$-queue layout $\ql{H'}$ of $H'$. However, we use our modification of the algorithm by Alam et al.~\cite{DBLP:journals/algorithmica/AlamBGKP20} described in \cref{sec:planar-3-trees-refine}. Denote by $x_1, \dots, x_h$ the vertices of the subgraph $H$ of $H'$ as they appear in $\ql{H'}$ (i.e., we ignore subdivision vertices introduced when augmenting $H$ to $H'$) and by $\Qh_0,\ldots,\Qh_4$ the queues of $\ql{H'}$ as described in Invariants~\ref{t:order}--\ref{t:bind}. To compute the linear layout $\ql{G}$ of $G$, we use \cref{lem:hpartitionqueus}, which orders the vertices of each layer $V_i$, with $i \geq 0$, of $\mathcal{L}$ as:
\[
\overrightarrow{V_i}=A_{x_1}\cap V_i, A_{x_2}\cap V_i, \dots, A_{x_h}\cap V_i,
\]
where $A_{x_1},\ldots,A_{x_h}$ are the bags (i.e., the tripods) of the $H$-partition of $G$. 
Unlike in the original algorithm, we do not order the vertices in each subset $A_{x_j}\cap V_i$, with $j\in\{1,\ldots,h\}$, arbitrarily. Instead, we carefully choose their order; we describe this choice in the rest of this section.
Let $\tau$ be the tripod of bag $A_{x_j}$. Then, $A_{x_j}\cap V_i$ contains at most one vertex of each vertical path of $\tau$. We will order the three vertical paths of $\tau$, which defines the order of the (at most three) vertices of $A_{x_j}\cap V_i$  for every $i\geq 0$. 

Let $L_l$ be the level of $v_\tau$ in the peeling-into-levels of $H'$, with $0 \leq l< \lambda $. 
By \cref{lem:at-most-2}, there are at most four connected components~$c_s^1$, $c_s^2$, $c_t^1$ and $c_t^2$ of the subgraph $H_{l+1}'$ of $H'$ induced by the vertices of $L_{l+1}$, such that the edges connecting $v_\tau$ to vertices of $c_s^1$ and $c_s^2$ ($c_t^1$ and $c_t^2$) belong to $\Qh_2$ ($\Qh_3$, respectively).
Let $c$ be one of $c_s^1$, $c_s^2$, $c_t^1$ and $c_t^2$;  $c$ may contain vertices that correspond~to~tripods in $G$ (i.e., not to subdivisions introduced while augmenting $H$ to $H'$). We refer to the union of these vertices of $G$ as the \emph{tripod-vertices} of $c$.
By Invariant~\ref{t:order},~$v_\tau$ precedes the vertices of $c_s^1$, $c_s^2$, $c_t^1$ and $c_t^2$ in $\ql{H'}$. Also by \ref{t:order_component}, we~may assume that the vertices of $c_s^1$ ($c_t^1$) precede those of $c_s^2$ ($c_t^2$,~respectively). Additionally, \cref{lem:at-most-2} ensures that the vertices of $c_s^2$ precede those~of~$c_t^1$.

Since an edge $(v_\tau,v_{\tau'})$ exists in $H$, if $\tau$ is a parent of $\tau'$, or vice versa, by \cref{lem:parent-child}, for each vertex $v_{\tau'}$ of $H$ that is a neighbor of $v_\tau$ in one of $c_s^1$, $c_s^2$, $c_t^1$ and $c_t^2$, it follows that $\tau$ is a parent of $\tau'$.
By Property~\ref{prp:t_parent}, there is a vertical path of $\tau$ in $G$, say $p$ ($q$), such that no tripod-vertex of $c_s^2$ ($c_t^2$, respectively) is adjacent to it in $G$. Note that $p$ and $q$ might be the same vertical path of $\tau$.

We now describe the order of the three vertical paths of $\tau$. We only specify the first one; the other two can be arbitrarily ordered:
\begin{enumerate*}[label={(\roman*)}]
\item\label{first:1} if the tripod-vertices of $c_s^1$ and $c_s^2$ are connected to all three vertical paths of $\tau$ in $G$, then  $p$ is the first vertical path of $\tau$;
\item\label{first:2} if the tripod-vertices of $c_t^1$ and $c_t^2$ are connected to all three vertical paths of $\tau$ in $G$, then $q$ is the first vertical path of $\tau$;
\item\label{first:3} otherwise, any vertical path of $\tau$ can be first.
\end{enumerate*}
Before proving formally that Cases~\ref{first:1} and~\ref{first:2} cannot apply simultaneously (which implies that the order of the vertical paths of $\tau$ is well defined), we state two important implications of the described choice for the first vertical path of $\tau$. 
\begin{myenumerate2}{(P.\arabic*)}\setcounter{enumi}{10}
\item \label{prp:q2} Under our assumption that all vertices of $c_s^1$ precede those of $c_s^2$, if tripod-vertices of $c_s^1$ and $c_s^2$ are connected to all three vertical paths of $\tau$, then tripod-vertices of $c_s^2$ are not connected to the first vertical path of $\tau$. 
\item \label{prp:q3} Also, under our assumption that all vertices of $c_t^1$ precede those of $c_t^2$, if tripod-vertices of $c_t^1$ and $c_t^2$ are connected to all three vertical paths of $\tau$, then tripod-vertices of $c_t^2$ are not connected to the first vertical path of $\tau$.
\end{myenumerate2}

\begin{figure}[t]
   \centering
   \begin{subfigure}[b]{0.32\textwidth}
        \centering
        \includegraphics[page=9]{figures/tripod.pdf}
        \caption{}
    \label{fig:two-bottom-components}
    \end{subfigure}
    \hfil
    \begin{subfigure}[b]{0.32\textwidth}
        \centering
        \includegraphics[page=10]{figures/tripod.pdf}
        \caption{}
        \label{fig:alternating-edges}
    \end{subfigure}
    \begin{subfigure}[b]{0.32\textwidth}
        \centering
        \includegraphics[page=13]{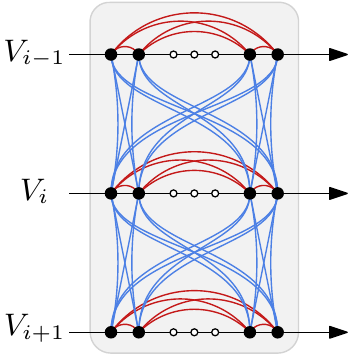}
        \caption{}
        \label{fig:reduce_1}
   \end{subfigure}
    \caption{
    Illustrations for: (a)-(b)~the fact that Cases~\ref{first:1} and~\ref{first:2} in the selection of the first vertical path of $\tau$ cannot apply simultaneously, and
    (c)~the proof of \Cref{lem:reduce_1}.
    }
    \label{fig:tripods-edges-app}
\end{figure}

We now prove that Cases~\ref{first:1} and~\ref{first:2} in the selection of the first vertical path of $\tau$ cannot apply simultaneously.
Assume for a contradiction that the tripod-vertices of $c_s^1$ and $c_s^2$ are connected to all three vertical paths of $\tau$, and also that the tripod-vertices of $c_t^1$ and $c_t^2$ are connected to all three vertical paths of $\tau$.
Since $v_\tau$ is the top vertex of the two triangular faces $f_t^1$ and $f_t^2$ that contain $c_t^1$ and $c_t^2$ in their interior in $\Gamma(H'_l)$, it follows that the edges connecting $v_\tau$ to  $c_t^1$ and $c_t^2$ appear consecutively around $v_\tau$ in the cyclic order of the binding edges between levels $L_l$ and $L_{l+1}$ of $H'$; refer to the blue edges of \cref{fig:two-bottom-components}. 
Also, since $v_\tau$ is the side vertex of the two triangular faces $f_s^1$ and $f_s^2$ that contain $c_s^1$ and $c_s^2$ in their interior in $\Gamma(H'_l)$, we may assume w.l.o.g.\ that the edges connecting $v_\tau$ to  $c_s^1$ immediately precede the edges connecting $v_\tau$ to  $c_t^1$ and $c_t^2$, while the edges connecting $v_\tau$ to  $c_s^2$ immediately follow them; refer to the green edges of \cref{fig:two-bottom-components} (note that since we have assumed that the vertices of $c_s^1$ precede the vertices of $c_s^2$, our assumption on the order of the edges around $v_\tau$ is a property of the algorithm by Alam et al.~\cite{DBLP:journals/algorithmica/AlamBGKP20}). 
Now, if the tripod-vertices of $c_s^1$ and $c_s^2$ are connected to all three vertical paths of $\tau$, and the same holds for the tripod-vertices of $c_t^1$ and $c_t^2$, then Property~\ref{prp:t_cyclic_order_path} implies that in $G$ the edges connecting vertices of $\tau$ to tripod-vertices of $c_t^1$ and $c_t^2$ alternate with edges connecting vertices of $\tau$ to tripod-vertices of $c_s^1$ and $c_s^2$; see e.g. \cref{fig:alternating-edges}. Hence, the cyclic order of the edges around $v_\tau$ in $H'$ is not the same as the cyclic order of the edges incident to vertices of $\tau$ along a clockwise traversal of $\tau$ in $G$, contradicting Property~\ref{prp:t_cyclic_order_all}. Hence, Case~\ref{first:1} and~\ref{first:2} cannot apply simultaneously, as we initially claimed.

\section{Reducing the bound}
\label{sec:main}

To reduce the bound, we turn our attention to the analysis of the required number of queues for the intra-bag inter-layer edges~\ref{e:in_b_out_l}, as well as for the inter-bag edges (i.e., either intra-layer~\ref{e:out_b_in_l} or inter-layer~\ref{e:out_b_out_l_for}--\ref{e:out_b_out_l_back}) given by Dujmović et al.~\cite{DBLP:journals/jacm/DujmovicJMMUW20}; refer to \cref{sec:planar-outline}.
For intra-bag inter-layer edges~\ref{e:in_b_out_l}, the original algorithm by Dujmović et al.~\cite{DBLP:journals/jacm/DujmovicJMMUW20} uses three queues, since $\ell=3$; see blue edges in~\cref{fig:intra_bag_edges}. We prove that no three intra-bag inter-layer edges form a $3$-rainbow, implying that the upper bound on the queue~number of planar graphs can be improved from $49$~to~$48$. 

\begin{lemma}\label{lem:reduce_1}
In the queue layout computed by our modification of the algorithm by Dujmović et al.~\cite{DBLP:journals/jacm/DujmovicJMMUW20}, no three intra-bag inter-layer edges of $G$ form a $3$-rainbow. 
\end{lemma}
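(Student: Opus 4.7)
The plan is to argue by contradiction: suppose three intra-bag inter-layer edges $(u_1,v_1)$, $(u_2,v_2)$, $(u_3,v_3)$ with $u_k \prec v_k$ form a $3$-rainbow in $\lo{G}$, i.e.\ $u_1 \prec u_2 \prec u_3 \prec v_3 \prec v_2 \prec v_1$. Since the three edges pairwise nest, Property~\ref{p:intra-bag-nest} immediately forces them all into the same bag, so all six endpoints lie in a single tripod $\tau$ of the $H$-partition.

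Next, I would pin down the layers of these six endpoints. Let $a_k$ and $b_k$ denote the layer-indices of $u_k$ and $v_k$. Because each $(u_k,v_k)$ is an intra-bag inter-layer edge, Property~\ref{prp:t_bfs} (together with the BFS-layering) forces its endpoints onto consecutive layers, so $b_k = a_k+1$. The layer-first vertex order of $\lo{G}$ combined with $u_1\prec u_2\prec u_3$ and $v_3\prec v_2\prec v_1$ yields $a_1\leq a_2\leq a_3$ and $b_3\leq b_2\leq b_1$. Substituting $b_k=a_k+1$ collapses both chains to $a_1=a_2=a_3=:a$ and $b_1=b_2=b_3=a+1$. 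So all three edges connect the same pair of consecutive layers $V_a$ and $V_{a+1}$. I expect this layer-collapse to be the main (though mild) technical step; once it is in place, the rest is structural bookkeeping on $\tau$.

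Finally, I would exploit the fact that $\tau$ contains at most one vertex per vertical path on each layer. Since $u_1,u_2,u_3$ are three distinct vertices of $\tau$ on $V_a$, all three vertical paths $p^1,p^2,p^3$ of $\tau$ (written in our chosen within-bag order) carry a vertex at layer $a$, and so by the ordering within the bag, $u_k$ lies on $p^k$. Symmetrically, $v_3, v_2, v_1$ lie on $p^1, p^2, p^3$ respectively. The outer two rainbow edges are therefore $(u_1,v_1) = (v_a^{p^1}, v_{a+1}^{p^3})$ and $(u_3,v_3) = (v_a^{p^3}, v_{a+1}^{p^1})$. Both belonging to $G$ directly contradicts Property~\ref{prp:t_xing_edges}, which allows at most one of the two ``crossing'' edges between the paths $p^1$ and $p^3$ across consecutive layers. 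This contradiction completes the proof, and no appeal to the middle edge $(u_2,v_2)$ (which lies along $p^2$) is needed.
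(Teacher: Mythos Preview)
Your proposal is correct and follows essentially the same route as the paper's proof: reduce to a single bag via \ref{p:intra-bag-nest}, pin all six endpoints to one pair of consecutive layers, read off which vertical path each endpoint lies on from the within-bag order, and then invoke \ref{prp:t_xing_edges} on the outer two rainbow edges. The only cosmetic difference is that the paper obtains the layer collapse in one stroke by citing \ref{p:inter-layer-nest}, whereas you rederive it from the layer-first order and $b_k=a_k+1$; both are fine.
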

\begin{proof}
Assume to the contrary that there exist three such edges $(u_1,v_1)$, $(u_2,v_2)$ and $(u_3,v_3)$ forming a $3$-rainbow in $\ql{G}$ so that $u_1 \prec_G u_2 \prec_G u_3 \prec_G v_3 \prec_G v_2 \prec_G v_1$.
By \ref{p:intra-bag-nest} these edges belong to the same bag $A$ of the $H$-partition, while by \ref{p:inter-layer-nest}  their endpoints belong to two consecutive layers $V_i$ and $V_{i+1}$ of $\mathcal{L}$.
Due to the order, $u_1,u_2,u_3 \in V_i$ and $v_1,v_2,v_3 \in V_{i+1}$. 
The order of $A\cap V_i$~and~$A\cap V_{i+1}$~is  $u_1 \prec_G u_2 \prec_G u_3$ and $v_3 \prec_G v_2 \prec_G v_1$; see \cref{fig:reduce_1}.
Let $p_1$, $p_2$ and $p_3$ be the first, second and third vertical paths of tripod $\tau$ forming $A$. Then, $(u_1,v_3)\in p_1$, $(u_2,v_2)\in p_2$ and $(u_3,v_1)\in p_3$. 
However, $(u_1,v_1)$  and $(u_3,v_3)$ contradict~\ref{prp:t_xing_edges}.
\end{proof}

\noindent For inter-bag edges \ref{e:out_b_in_l}--\ref{e:out_b_out_l_back}, the algorithm by Dujmović et al.~\cite{DBLP:journals/jacm/DujmovicJMMUW20} uses $3\cdot 15$ queues, since $k=5$ and $\ell=3$; see green, purple and orange edges in~\cref{fig:inter_bag_edges}. We exploit \ref{p:inter-rainbow} to prove that $3\cdot 13$~queues suffice. This further improves the upper bound on the queue~number of planar graphs from~$48$~to~$\bound$. 

\begin{lemma}\label{lem:reduce_2}
In the queue layout computed by our modification of the algorithm by Dujmović et al.~\cite{DBLP:journals/jacm/DujmovicJMMUW20}, the inter-bag edges of $G$ do not form a $40$-rainbow.
\end{lemma}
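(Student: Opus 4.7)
My plan is to partition the inter-bag edges into the three categories~\ref{e:out_b_in_l}--\ref{e:out_b_out_l_back} and show that within each single category no 14-rainbow can form. Since $\lceil 40/3 \rceil = 14$, any hypothetical 40-rainbow consisting of inter-bag edges would by pigeonhole contain at least 14 edges of a single category, yielding a 14-sub-rainbow there and contradicting the per-category bound.

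Fix a category and suppose, towards contradiction, that edges $(u_1, v_1), \dots, (u_{14}, v_{14})$ with $u_i \lo{G} v_i$ form a 14-rainbow. Write $u_i \in A_{x_i}$, $v_i \in A_{y_i}$; since within every layer $\overrightarrow{V_j}$ the bags appear in the order inherited from $\ql{H'}$, we have $x_i \lo{H'} y_i$ for all $i$. Group the 14 edges according to the queue $\Qh_{q_i} \in \{\Qh_0,\dots,\Qh_4\}$ of $\ql{H'}$ that contains the projected edge $(x_i, y_i)$. By property~\ref{p:inter-rainbow}, each of the five resulting groups contains at most $\ell = 3$ edges, and any group of size exactly three forces either its three $u$-endpoints or its three $v$-endpoints to lie in a common bag, hence to occupy the entire vertical-path triple of a single tripod on one $G$-layer. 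Reaching $14$ thus requires the distribution $3+3+3+3+2$: at least four of the five $H$-queues must contribute a ``full'' 3-group.

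The core of the argument is to rule out four simultaneous full 3-groups, using the refinements from \cref{sec:planar-refine,sec:planar-3-trees-refine}. For a full 3-group whose projected edges lie in a binding queue $\Qh_2$ or $\Qh_3$, I would combine \cref{lem:at-most-2} with our choice of the first vertical path through~\ref{prp:q2} and~\ref{prp:q3}: the tripod-vertices of $c_s^2$ (respectively $c_t^2$) cannot reach the first vertical path of the opposing tripod, so three vertices spanning all three vertical paths on a single layer are unavailable as endpoints. For a full 3-group in a level queue $\Qh_0$ or $\Qh_1$, I would use Invariant~\ref{outer:3} together with \cref{lem:side-top} to bound the number of triangular faces of $H'_j$ for which a given vertex is top or side, forcing at least one of the two level-queue groups to be short. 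Finally, for the inter-layer categories~\ref{e:out_b_out_l_for}--\ref{e:out_b_out_l_back}, property~\ref{prp:t_xing_edges} blocks the simultaneous pair of crossing inter-layer edges that a full 3-group would require inside a single tripod.

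The main obstacle is the case analysis: for each of the three inter-bag categories, and for each choice of whether the common bag of each alleged full 3-group sits on the $u$-side or the $v$-side of the rainbow, one must verify that the refinements above collectively save at least two edges from the naive bound of $15 = \ell k$. Once this is done, one obtains the sharper bound $3 \cdot 3 + 2 \cdot 2 = 13$ per category, contradicting the assumed 14-rainbow and completing the proof.
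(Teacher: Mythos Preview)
Your high-level decomposition into the three categories~\ref{e:out_b_in_l}--\ref{e:out_b_out_l_back} and the further split by the five $H'$-queues matches the paper. However, the paper's argument is both simpler and different in a way that matters: it shows that the subsets $E_3^2$ and $E_3^3$ (edges projecting to $\Qh_2$ and $\Qh_3$) \emph{each individually} cannot form a $3$-rainbow, which immediately yields $3+3+2+2+3=13$ per category. No argument about $\Qh_0$, $\Qh_1$ or $\Qh_4$ is needed, and no ``simultaneity'' reasoning among four full groups is required.

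Your proposal has two concrete gaps. First, the tools you invoke for the non-binding parts do not apply: \cref{lem:side-top} concerns top/side vertices of faces of $\Gamma(H'_j)$ and governs the binding queues $\Qh_2,\Qh_3$ via Invariant~\ref{t:bind}; it says nothing useful about the level queues $\Qh_0,\Qh_1$, which are the span-$1$/span-$2$ edges of the outerplanar drawing. Likewise, Property~\ref{prp:t_xing_edges} is a statement about \emph{intra}-bag edges between two vertical paths of one tripod; it does not constrain inter-bag edges of \ref{e:out_b_out_l_for}--\ref{e:out_b_out_l_back}, whose three rainbow edges leave the common tripod rather than stay inside it. Second, even for $\Qh_2$ (and $\Qh_3$) you are missing the step that makes Properties~\ref{prp:q2}--\ref{prp:q3} applicable: from Property~\ref{p:inter-rainbow} alone the common bag could sit on the $v$-side, in which case the shared tripod is a \emph{child}, and \ref{prp:q2}--\ref{prp:q3} (stated for the parent's first vertical path) give you nothing. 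The paper resolves this with Invariant~\ref{t:bind}: a vertex $y$ on level $L_{j+1}$ has exactly one $\Qh_2$-neighbour on $L_j$, so $y_1=y_2=y_3$ would force $x_1=x_2=x_3$ anyway; hence the common bag is always the parent tripod on the $u$-side, and then \cref{lem:at-most-2}, Property~\ref{prp:t_parent} and Property~\ref{prp:q2} finish the contradiction. Without this step your $\Qh_2/\Qh_3$ case does not close.
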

\begin{proof}
To prove the statement, it suffices to show that the inter-bag edges of $G$ that belong to each of \ref{e:out_b_in_l}--\ref{e:out_b_out_l_back} form at most a $13$-rainbow.
We focus on the edges of \ref{e:out_b_in_l}, that is, on the intra-layer inter-bag edges of $G$; the other two types of edges, i.e., forward and backward inter-layer inter-bag edges, can be treated similarly. 
We partition the edges of \ref{e:out_b_in_l} into five subsets $E_3^0,\ldots,E_3^4$, such that for $i=0,\ldots,4$, set $E_3^i$ contains the following subset of \ref{e:out_b_in_l}-edges~of~$G$: 
\[
E_3^{i} =\{(u,v) \in \mbox{\ref{e:out_b_in_l}}: u\in A_x, v\in A_y, (x,y) \in \Qh_i  \}.
\]
By Property~\ref{p:inter-rainbow}, the edges of $E_3^i$ can form at most a $3$-rainbow (this property actually implies the initial bound of $15$ queues for all intra-layer inter-bag edges of $G$ by Dujmović et al.~\cite{DBLP:journals/jacm/DujmovicJMMUW20}). We next prove that neither the edges of $E_3^2$ nor the edges of $E_3^3$ form a $3$-rainbow, which yields the desired reduction from $15$ to $13$ on the size of the maximum rainbow formed by the edges of \ref{e:out_b_in_l}. 
Assume that this is not true, say for the edges of $E_3^2$; a symmetric argument applies for the edges of $E_3^3$. 
Let $(u_1,v_1)$, $(u_2,v_2)$ and $(u_3,v_3)$ be three edge of $E_3^2$ that form a $3$-rainbow, such that $u_1\prec_G u_2 \prec_G u_3 \prec_G v_3\prec_G v_2\prec_G v_1$. 
Since $(u_1,v_1)$, $(u_2,v_2)$ and $(u_3,v_3)$ are \ref{e:out_b_in_l}-edges, i.e., intra-layer edges, their endpoints $u_1$, $u_2$, $u_3$, $v_1$, $v_2$ and $v_3$ all belong to the same layer of the BFS-layering of $G$\footnote{Note that if we had assumed that $(u_1,v_1)$, $(u_2,v_2)$ and $(u_3,v_3)$ belonged to one of \ref{e:out_b_out_l_for} or \ref{e:out_b_out_l_back}, i.e., inter-layer edges, vertices $u_1$, $u_2$ and $u_3$ would all belong to one layer, while vertices $v_1$, $v_2$ and $v_3$ would belong to the next layer. In what follows, we will only use the fact that vertices $u_1$, $u_2$ and $u_3$ belong to the same layer and that vertices $v_1$, $v_2$ and $v_3$ belong to the same layer, without discriminating whether the two layers are the same or not.}.
Assuming that $u_i \in A_{x_i}$ and $v_i \in A_{y_i}$, it follows that $(x_i,y_i)$ is an edge of $H$ assigned to $\Qh_2$ and $x_i\prec_H y_i$. 
By Property~\ref{p:inter-rainbow}, either $x_1=x_2=x_3$ or $y_1=y_2=y_3$ holds.
First, we prove that $x_1=x_2=x_3$ always holds.
Assume for a contradiction that, w.l.o.g., $x_1\neq x_2$.
Then, $y=y_1=y_2=y_3$ holds.
Since the edges of $\Qh_2$ are binding in the peeling-into-levels decomposition of $H'$ and since $x_1\prec_H x_2\prec_H y$, vertices $x_1$ and $x_2$ belong to level $L_j$ of $H'$, while $y$ belongs to level $L_{j+1}$, for some $0\leq j<\lambda$. 
By Property~\ref{t:bind}, $y$ has only one neighbor on level $L_j$ such that the edge connecting $y$ to this neighbor is assigned to $\Qh_2$. This contradicts the fact that both edges $(x_1,y)$ and $(x_2,y)$ are assigned to $\Qh_2$, since $x_1 \neq x_2$. Hence $x=x_1=x_2=x_3$ holds. 

Now, since the edges of $\Qh_2$ are binding in the peeling-into-levels decomposition of $H'$, and since, for $i=1,2,3$ , $x \prec_H y_i$, it follows that vertex $x$ belongs to level $L_j$ of $H'$, while vertex $y_i$ belongs to level $L_{j+1}$, for some $0\leq j<\lambda$. 
Let $\tau$ be the tripod of bag $A_x$ and let $\tau_1$, $\tau_2$, $\tau_3$ be the tripods of $A_{y_1}$, $A_{y_2}$ and $A_{y_3}$, respectively. Note that vertices $y_1$, $y_2$ and $y_3$ are not necessarily distinct, that is, the corresponding tripods $\tau_1$, $\tau_2$ and $\tau_3$ are not necessarily pairwise different. However, since $u_1\prec_G u_2\prec_G u_3$, and since  $u_1$, $u_2$ and $u_3$ belong to the same layer in the BFS-layering $\mathcal{L}$ of $G$, we can conclude that $u_1$, $u_2$ and $u_3$ belong to $p_1$, $p_2$ and $p_3$, respectively, where $p_1$, $p_2$ and $p_3$ are the first, second and third vertical paths of $\tau$.
Since, for $i=1,2,3$, the edge $(x,y_i)$ belongs to  $H$, it follows that $\tau$  is a  parent tripod of $\tau_i$ or vice versa. Since $x$ belongs to level $L_j$ of $H'$ and vertices $y_1$, $y_2$ and $y_3$ belong to level $L_{j+1}$, by \cref{lem:parent-child} we conclude that $\tau$ is a parent tripod of $\tau_1$, $\tau_2$ and $\tau_3$.
By \cref{lem:at-most-2},  there exist at most two connected components $c_s^1$ and $c_s^2$ of $H'_{j+1}$, such that the edges connecting $x$ to vertices of $c_s^1$ and $c_s^2$ have been assigned to $\Qh_2$. W.l.o.g, by Invariant~\ref{t:order_component} we assume  that all vertices of $c_s^1$  (if any) precede those of $c_s^2$ (if any).

We first argue that not all vertices $y_1$, $y_2$ and $y_3$ belong to only one of $c_s^1$ or $c_s^2$, which further implies that both components $c_s^1$ and $c_s^2$ exist. Assume to the contrary that $y_1$, $y_2$ and $y_3$ belong to $c_s^1$; a similar argument applies for $c_s^2$.
By Property~\ref{prp:t_parent}, tripod-vertices of component $c_s^1$ are not adjacent to the vertices of at least one vertical path of $\tau$ in $G$, say $p_3$. In this case, however, we obtain a contradiction to the fact that $v_3$, which is a tripod-vertex of $c_s^1$, is connect to $u_3$ that belongs to $p_3$, i.e., edge $(u_3,v_3)$ cannot exist in $G$. 

Hence, not all vertices $y_1$, $y_2$ and $y_3$ belong to only one of $c_s^1$ or $c_s^2$, as we initially claimed. In particular, since $y_3 \preceq_H y_2 \preceq_H y_1$ and since all vertices of $c_s^1$ precede those of $c_s^2$, it follows that $y_3$ belongs to $c_s^1$, while $y_1$ belongs to $c_s^2$. Since $u_1$, $u_2$ and $u_3$ belong to $p_1$, $p_2$ and $p_3$, respectively, the vertices of $c_s^1$ and $c_s^2$ are connected to all vertical paths of $\tau$. Thus, Property~\ref{prp:q2} applies, which implies that the tripod-vertices of $c_s^2$ (including $v_1$) are not connected to the first vertical path $p_1$ of $\tau$, which is a contradiction to the fact that $(u_1,v_1)$ is an edge of $G$, since $u_1\in p_1$. Hence, the edges of $E_3^2$ cannot form a $3$-rainbow. Symmetrically, we can prove that the edges of $E_3^3$ cannot form a $3$-rainbow, which completes the proof of the lemma.
\end{proof}

\medskip\noindent We are now ready to state the main theorem of this section.

\begin{theorem}\label{thm:main}
Every planar graph has queue number at most $\bound$.
\end{theorem}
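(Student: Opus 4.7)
The plan is to assemble the theorem by plugging our refinements into the Dujmović et al.\ framework (\Cref{lem:hpartitionqueus}) and re-counting queues per edge category. As in their original argument, we take the $H$-partition of $G$ produced by the (modified) tripod decomposition from \Cref{sec:planar-refine}, which has layered-width $\ell=3$, together with the $5$-queue layout $\ql{H'}$ of the associated (augmented) planar $3$-tree obtained via the modified peeling-into-levels algorithm of \Cref{sec:planar-3-trees-refine}, so that $k \le 5$. The vertex order $\lo{G}$ is the layered order $\overrightarrow{V_0},\overrightarrow{V_1},\ldots$ with the within-bag orderings specified at the end of \Cref{sec:planar-refine}. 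It then suffices to bound the number of queues needed for each of the five edge categories \ref{e:in_b_in_l}--\ref{e:out_b_out_l_back}.

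The breakdown is as follows. For \ref{e:in_b_in_l} (intra-layer intra-bag edges), the generic bound of $\lfloor \ell/2\rfloor=1$ queue is already optimal. For \ref{e:in_b_out_l} (inter-layer intra-bag edges), the naive bound is $\ell=3$, but \Cref{lem:reduce_1} shows that no three such edges form a $3$-rainbow in $\lo{G}$, so by the Heath--Rosenberg characterisation they fit into $2$ queues. For each of \ref{e:out_b_in_l}, \ref{e:out_b_out_l_for} and \ref{e:out_b_out_l_back} (the three kinds of inter-bag edges), the naive bound is $\ell k = 15$ queues, but \Cref{lem:reduce_2} rules out a $13$-rainbow among the edges of each category, so $13$ queues suffice per category. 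Summing up,
\[
\mathrm{qn}(G) \;\le\; \underbrace{1}_{\text{\ref{e:in_b_in_l}}} + \underbrace{2}_{\text{\ref{e:in_b_out_l}}} + \underbrace{13}_{\text{\ref{e:out_b_in_l}}} + \underbrace{13}_{\text{\ref{e:out_b_out_l_for}}} + \underbrace{13}_{\text{\ref{e:out_b_out_l_back}}} \;=\; \bound,
\]
which yields the claimed bound for any planar graph $G$ (after first triangulating, since adding edges can only increase the queue number).

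The only non-routine ingredient is making sure the saving from \Cref{lem:reduce_2} really applies to all three inter-bag categories, not just to \ref{e:out_b_in_l}, which is the case written out in detail. The proof of \Cref{lem:reduce_2} only uses that the three endpoints $u_1,u_2,u_3$ share a layer and that $v_1,v_2,v_3$ share a layer (see the footnote in that proof), together with Properties~\ref{prp:t_parent}, \ref{prp:q2}, \ref{prp:q3}, \Cref{lem:parent-child}, and \Cref{lem:at-most-2}. Since forward (resp.\ backward) inter-layer inter-bag edges have exactly this structure, with $\{u_i\}$ on one BFS-layer and $\{v_i\}$ on the next, the same case analysis transfers verbatim: the partition of each category into five sets $E^0_\bullet,\ldots,E^4_\bullet$ according to which queue $\Qh_0,\ldots,\Qh_4$ carries the image edge in $\ql{H'}$ still reduces to $3$-rainbows, and the sets indexed by $\Qh_2$ and $\Qh_3$ are again $2$-rainbow bounded by the selection rule for the first vertical path of $\tau$. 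Thus each inter-bag category contributes $3\cdot 3 + 2\cdot 2 = 13$ queues, and the theorem follows by \Cref{lem:hpartitionqueus}.
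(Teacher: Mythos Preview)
Your proposal is correct and follows exactly the paper's intended assembly: the paper states \Cref{thm:main} without an explicit proof, relying on \Cref{lem:reduce_1} and \Cref{lem:reduce_2} to replace the counts $3$ and $3\cdot15$ in the Dujmović et al.\ accounting by $2$ and $3\cdot13$, respectively, giving $1+2+39=\bound$. One small wording slip: you write that \Cref{lem:reduce_2} ``rules out a $13$-rainbow'' in each category, but what the lemma actually shows (and what you use) is that each category forms \emph{at most} a $13$-rainbow, i.e., it rules out a $14$-rainbow; your subsequent conclusion that $13$ queues suffice per category is the correct one.
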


\section{Conclusions}
\label{sec:conclusions}

In this work, we improved the upper bound on the queue number of planar graphs by refining the three techniques involved in the original~algorithm~\cite{DBLP:journals/jacm/DujmovicJMMUW20}. 
We believe that our approach has the potential to further reduce the~upper bound by at least $3$  (i.e., from $\bound$ to $39$). However, more elaborate~arguments that exploit deeper the planarity of the graph are required, and several details need to be worked out.
Still the gap with the lower bound of $4$ remains large and needs to be further reduced. In this regard, determining the exact queue number of planar 3-trees becomes critical, since an improvement of the current upper bound of $5$ (to meet the lower bound of $4$) directly implies a corresponding improvement on the upper bound on the queue number of general planar graphs. 
On the other hand, to obtain a better understanding of the general open problem, it is also reasonable to further examine subclasses of planar graphs, such as bipartite planar graphs or planar graphs with bounded degree (e.g., max-degree~3).

\bibliographystyle{splncs03}
\bibliography{general,stacks,queues}

\end{document}